\def\blfootnote{\xdef\@thefnmark{}\@footnotetext}
\newcommand{\removed}[1]{}
\newcommand{\ra}{\rightarrow}
\DeclarePairedDelimiter\floor{\lfloor}{\rfloor}
\title{On the Distributed Construction of Stable Networks in Polylogarithmic Parallel Time}
\author{Matthew Connor\inst{1} \and Othon Michail\inst{1} \and Paul G. Spirakis\inst{1,2}}
\institute{
Department of Computer Science, University of Liverpool, UK\\ \and Computer Engineering and Informatics Department, University of Patras, Greece\\
Email:\email{ M.Connor3@liverpool.ac.uk, Othon.Michail@liverpool.ac.uk, P.Spirakis@liverpool.ac.uk}}
\begin{document}

\maketitle

\begin{abstract}
We study the class of networks which can be created in polylogarithmic parallel time by
\emph{network constructors}: groups of anonymous agents that interact randomly under a uniform random scheduler with the ability to form connections between each other.
Starting from an empty network, the goal is to construct a stable network which belongs to a given family.
We prove that the class of trees where each node has any $k \geq 2$ children can be constructed in $O(\log{n})$ parallel time with high probability.
We show that constructing networks which are $k$-regular is $\Omega(n)$ time, but a minimal relaxation to $(l, k)$-regular networks, where $l = k - 1$ can be constructed in polylogarithmic parallel time for any fixed $k$, where $k > 2$.
We further demonstrate that when the finite-state assumption is relaxed and $k$ is allowed to grow with $n$, then $k = \log\log{n}$ acts as a threshold above which network construction is again polynomial time.
We use this to provide a partial characterisation of the class of polylogarithmic time network constructors.\newline\\
\textbf{Submitted to Track B: Distributed and Mobile}\newline\\
\noindent\textbf{All omitted details are included in a clearly marked Appendix, 
to be read at the discretion of the Program Committee.}
\end{abstract}

\noindent
\textbf{Keywords:} population protocol, distributed network construction, polylogarithmic time protocol, spanning tree, regular network, partial characterisation

\section{Introduction}
\label{sec:intro}

\emph{Passively dynamic networks} are an important type of dynamic networks in which the network dynamics are \emph{external} to the algorithm and are a property of the environment in which a given system operates. Wireless sensor networks in which individual sensors are carried by autonomous entities, such as animals, or are deployed in a dynamic environment such as the flow of a river are examples of passively dynamic networks. In terms of modelling such systems, the network dynamics are usually assumed to be controlled by an \emph{adversary scheduler}, who has exclusive control on the interaction or communication sequence among the computational entities. 

One line of research has been assuming the scheduler to be \emph{fair}, in the sense that it can forever conceal potentially reachable configurations of the system. This sub-type of passively dynamic networks are known as population protocols and were introduced in the seminal paper of Angluin \emph{et al.} \cite{AADFP06}. \footnote{Which, by the way, was this year's recipient of the Edsger W. Dijkstra Prize in Distributed Computing.} A type of fair scheduler which is typically assumed when the running time of protocols is to be analysed, is the \emph{uniform random scheduler}, which in every discrete step selects equiprobably a pair of entities to interact from all permissible pairs of entities. Traditionally, the population protocols literature had been considering extremely weak entities and the goal was to reveal the computational possibilities and limitations under such a challenging interaction scheme. Recent progress has been highlighting the interesting trade-offs between local space of the entities and the running time of protocols, showing among other things that very fast running times (where fast is here considered to be anything growing as polylog$(n)$, $n$ being the total number of entities in the system) can be achieved for a wide range of basic distributed tasks if the entities are equipped with as few states as polylog$(n)$. Alistarh and Gelashvili \cite{AG15} have also proposed the first sub-linear leader election protocol, which stabilizes in $O(\log^3{n})$ parallel time, assuming $O(\log^3{n})$ states at each agent. Gasieniec and Stachowiak \cite{GS18} designed a space optimal ($O(\log\log{n})$ states) leader election protocol, which stabilises in $O(\log^2{n})$ parallel time. General characterizations, including upper and lower bounds of the trade-offs between time and space in population protocols are provided in \cite{AAEGR17}. Doty \emph{et al.} \cite{DEMST18} show that a state count of $O(n^{60})$ enables fast and exact population counting.

Another line has been considering worst-case adversary schedulers, which may even be aware of the protocol and trying to optimise against it. There, the entities are typically assumed to be powerful, like processors of traditional distributed systems, and the only restrictions imposed on the scheduler are instantaneous or temporal connectivity restrictions which essentially do not allow the scheduler to forever block communication between any two parts of the system. This was initiated by O'Dell and Wattenhofer \cite{OW05} for the asynchronous case and then the synchronous case was extensively studied in a series of papers by Kuhn \emph{et al.} \cite{KLO10}. Michail \emph{et al.} \cite{MCS12b} extended this to the case of possibly disconnected dynamic networks, in which connectivity is only guaranteed in a temporal sense.  

The other main type of dynamic networks with respect to who controls the changes in the network topology, are \emph{actively dynamic networks}. In such networks, the algorithm is able to either implicitly change the sequence of interactions by controlling the mobility of the entities or explicitly modify the network structure by creating and destroying communication links at will. This is for example the subject of the area of overlay network construction \cite{AACW05,AS07,AW07,GHS19} and very recently Michail \emph{et al.} introduced a fully distributed model for computation and reconfiguration in actively dynamic networks \cite{MSS20}.

An interesting alternative family of dynamic networks rises when one considers a mixture of the passive network dynamics of the environment and the active dynamics resulting from an algorithm that can partially control the network changes or that can fix network structures that the environment is unable to affect. This is naturally motivated by molecular interactions where, for example, proteins can bind to each other, forming structures and maintaining their stability  despite the dynamicity of the solution in which they reside. Michail and Spirakis \cite{MS16a} introduced and studied such an abstract model of distributed network construction, called the \emph{network constructors} model, where the network dynamicity is the same as in population protocols but now the finite-state entities can additionally activate and deactivate pairwise connections upon their interactions. It was shown that very complex global networks can be formed stably despite the dynamicity of the environment. Then Michail \cite{Mi18} studied a geometric variant of network constructors, in which the entities can only form geometrically constrained shapes in 2D or 3D space. Another interesting hybrid dynamic network model is the one by Gmyr \emph{et al.} \cite{GHSS17}, in which the entities have partial control over the connections of an otherwise worst-case passively dynamic network, following the model of Kuhn \emph{et al.} \cite{KLO10}.

\subsection{Our Approach}
\label{subsec:approach}

We investigate which families of networks can be stably constructed by a distributed computing system in polylogarithmic parallel time. To our knowledge, this is the first attempt made to approach this task.

Our protocols assume the existence of a leader node. A node $x$ is a \emph{leader node} if in the initial configuration all $u \in V \setminus \{x\}$, where $V$ is the set of all nodes, are in state $q_0$ and $x$ is in state $s \neq q_0$.

We first study the \emph{k-Children Spanning Tree} problem, where the goal is to construct a tree where each node has at most $k \geq 2$ children. We show that it is possible to solve this problem for any $k$ in $O(\log{n})$ time with high probability.
We then show that network constructors which create $k$-regular graphs necessarily take $\Omega(n)$ time. However, with a minimal relaxation to $(k,k - 1)$-regular networks the problem can be solved for any constant $k \geq 2$ in polylogarithmic time.
We examine this as a special case of the \emph{$(l,k)$-Regular Network} problem, where the goal is to construct a spanning network in which every node has at least $l < k$ and at most $k$ connections, where $2 < k < n$.
We then transitioned to experimental analysis of the protocol which not only provided evidence of the sharp contrast of the minimal relaxation but also revealed a threshold value for $k$, beyond which the problem reverts to polynomial time.
We used this knowledge to propose a first partial characterisation of the set of polylogarithmic time network constructors.
We leave providing formal bounds as an open problem, with a potential proof strategy provided in the Appendix.

In Section 2, we formally define the model of network constructors and the network construction problems that are considered in this work.
In Section 3, we study the $k$-children spanning tree problem, first for $k = 2$, and then for $k \geq 2$.
In Section 4, we first provide the lower bound for $k$-regular networks. We then present a protocol for the $(l, k)$-regular network problem and our experimental analysis culminating in the partial characterisation.
In Section 5, we conclude and give further research directions that are opened by our work.

\section{Preliminaries and Definitions}
\label{sec:prel}

\subsection{The model}
\label{subsec:model}

\begin{definition}
\normalfont A \emph{Network Constructor} (NET) is a distributed protocol defined by a 4-tuple $(Q, q_0, Q_{out}, \delta)$, where $Q$ is a finite set of \emph{node-states}, $q_0 \in Q$ is the \emph{initial node-state}, $Q_{out} \subseteq Q$ is the set of \emph{output node-states}, and $\delta:Q \times Q \times \{0,1\} \rightarrow Q \times Q \times \{0,1\}$ is the \emph{transition function}.
\end{definition}

If $\delta(a,b,c)=(a',b',c')$, we call $(a,b,c) \rightarrow (a',b',c')$ a \emph{transition} (or \emph{rule}) and we define $\delta_1 (a,b,c)=a', \delta_2(a,b,c)=b'$, and $\delta_3(a,b,c)=c'$.
A transition $(a,b,c) \rightarrow (a',b',c')$ is called \emph{effective} if $x \neq x'$ for at least one $x \in \{a,b,c\}$ and \emph{ineffective} otherwise.
When we present the transition function of a protocol we only present the effective transitions. Additionally, we agree that the \emph{size} of a protocol is the number of its states, i.e., $|Q|$.

The system consists of a population $V_I$ of $n$ distributed \emph{processes} (called \emph{nodes} for the rest of this paper).
In the generic case, there is an underlying \emph{interaction graph} $G_I=(V_I,E_I)$ specifying the permissible interactions between the nodes.
Interactions in this model are always pairwise.
In this work, $G_I$ is a \emph{complete undirected interaction graph}, i.e., $E_I=\{uv: u,v \in V_I \text{and } u \ne v\}$, where $uv=\{u,v\}$. Initially, all nodes in $V_I$ are in the initial node-state $q_0$.
A central assumption of the model is that edges have binary states. An edge in state 0 is said to be \emph{inactive} while an edge in state 1 is said to be \emph{active}. All edges are initially inactive.
Execution of the protocol proceeds in discrete steps. 
In every step, a pair of nodes $uv$ from $E_I$ is selected by an \emph{adversary scheduler} and these nodes interact and update their states and the state of the edge joining them according to the transition function $\delta$.

A \emph{configuration} is a mapping $C:V_I \cup E_I \rightarrow Q \cup \{0,1\}$ specifying the state of each node and each edge of the interaction graph.
Let $C$ and $C'$ be configurations, and let $u,\upsilon$ be distinct nodes.
We say that $C$ goes to $C'$ via \emph{encounter} $e=u\upsilon$, denoted $C \xrightarrow{e} C'$, if $(C'(u),C'(v), C'(e)) = \delta(C(u),C(v), C(e))$ or $(C'(v),C'(u),C'(e)) = \delta(C(v),C(u),C(e))$ and $C'(z)=C(z)$, for all $z \in (V_I \setminus \{u,v\}) \cup (E_I \setminus \{e\})$.
We say that $C'$ \emph{is reachable in one step from} $C$, denoted $C \rightarrow C'$, if $C \xrightarrow{e} C'$ for some encounter $e \in E_I$.
We say that $C'$ is \emph{reachable} from $C$ and write $C \leadsto C'$, if there is a sequence of configurations $C=C_0, C_1 ,..., C_t = C'$, such that $C_i \rightarrow C_{i+1}$ for all $i, 0 \leq i < t$.

An \emph{execution} is a finite or infinite sequence of configurations $C_0, C_1, C_2,...,$ where $C_0$ is an initial configuration and $C_i \rightarrow C_{i+1}$, for all $i \geq 0$.
A \emph{fairness condition} is imposed on the adversary to ensure the protocol makes progress. An infinite execution is \emph{fair} if for every pair of configurations $C$ and $C'$ such that $C \rightarrow C'$, if $C$ occurs infinitely often in the execution then so does $C'$.
In what follows, every execution of a NET will by definition considered to be fair.

We define the \emph{output of a configuration} $C$ as the graph $G(C)=(V,E)$ where $V=\{u \in V_I : C(u) \in Q_{out}\}$ and $E = \{uv : u,v \in V, u \neq v, \text{and } C(uv)=1\}$.
In words, the output-graph of a configuration consists of those nodes that are in output states and those edges between them that are active, i.e., the active subgraph induced by the nodes that are in output states.
The output of an execution $C_0, C_1, ... $ is said to \emph{stabilize} (or \emph{converge}) to a graph $G$ if there exists some step $t \geq 0$ such that (abbreviated “s.t.” in several places) $G(C_i)=G$ for all $i\geq t$, i.e., from step $t$ and onwards the output-graph remains unchanged. 
Every such configuration $C_i$, for $i \geq t$, is called \emph{output-stable}.
The \emph{running time} (or \emph{time to convergence}) of an execution is defined as the minimum such $t$ (or $\infty$ if no such $t$ exists).
Throughout the paper, whenever we study the running time of a NET, we assume that interactions are chosen by a \emph{uniform random scheduler} which, in every step, selects independently and uniformly at random one of the $|E_I| = n(n-1)/2$ possible interactions.
In this case, the running time becomes a random variable (abbreviated “r.v.” throughout) $X$ and our goal is to obtain bounds on the expectation $E[X]$ of $X$. Note that the uniform random scheduler is fair with probability 1.

In this work “time”  is treated as sequential in our analyses, i.e., a time-step consists of a single interaction selected by the scheduler. Such a sequential estimate can be easily translated to some estimate of parallel time. For example, assuming that $\Theta(n)$ interactions occur in parallel in every step, one could obtain an estimation of parallel time by dividing sequential time by $n$. All results are given in parallel time.

\begin{definition}
\normalfont We say that an execution of a NET on $n$ nodes \emph{constructs a graph} (or \emph{network}) $G$, if its output stabilizes to a graph isomorphic to $G$.
\end{definition}

\begin{definition}
\normalfont We say that a protocol $P$ \emph{constructs a graph language} $\mathscr{G}$, if in every execution $P$ constructs a graph $G \in \mathscr{G}$ and for all $G$ there exists an execution of $P$ which constructs $G$.
\end{definition}

\subsection{Problem definitions}
\label{subsec:problems}

Here we provide formal definitions for all of the classes of networks considered in this paper.\\

\noindent\\
\emph{k-Children Spanning Tree}. The goal is to construct a spanning tree where each individual element has at most $k \in \mathbb{N}$ children.\\
\noindent\\
\emph{$(l,k)$-Regular Network}. A spanning network where for any $l, k \in \mathbb{N}$ where $l < k$, elements with degree $d < l$ form a clique and all others have a degree of at least $l$ and at most $k$.\\

\subsection{Experimental Setup}
\label{subsec:experiment-setup}

We performed experiments with the goal of guiding a proof of the running time necessary to solve the $(l, k)$-regular network problem.
We learned that a formal proof would be difficult due to the reliance of random variables on the values of other random variables, so we leave this as an open problem.
We then experimented with different values of $k$ to see what the effect would be, and discovered a running time threshold in the process.
All were implemented using C and compiled with GCC.
All tests were repeated at least five times per value of $n$ and the average number of time steps taken as the result.
To terminate our experiments we designed special stabilisation conditions.
Details including a formal proof of correctness can be found in the Appendix.

\section{Polylogarithmic-time Protocols for $k$-Children Spanning Tree}

In this section, we study the complexity of the $k$-Children Spanning Tree problem. 
We first focus on the special case where  $k = 2$ and give a protocol (Protocol 1).
We show that it has a running time of $O(\log{n})$ parallel time with high probability.
Finally, we generalise for all $k \geq 2$  by giving a protocol (Protocol 2) and prove that the running time is again $O(\log{n})$.
\subsection{2-Children Spanning Tree}
\label{subsec:2-children-tree}

\vspace{-5mm}

\floatname{algorithm}{Protocol}
\renewcommand{\algorithmiccomment}[1]{// #1}
\begin{algorithm}[!h]
\begin{minipage}[b]{0.62\textwidth}
  \caption{\emph{2-Slot protocol}}\label{prot:gline}
  \begin{algorithmic}
    \medskip
    \State $Q=\{F, L_0,L_1,L_2,O_0,O_1,O_2\}$
    \State $\delta$:
    \begin{align*}
    (L_0, F, 0) &\ra (L_1, O_0, 1)\\
    (L_1, F, 0) &\ra (L_2, O_0, 1)\\
    (O_0, F, 0) &\ra (O_1, O_0, 1)\\
    (O_1, F, 0) &\ra (O_2, O_0, 1)
    \phantom{\hspace{10cm}}
    \end{align*}
    \State \Comment {All transitions that do not appear have no effect}
  \end{algorithmic}
  \end{minipage}
\end{algorithm}

\vspace{-5mm}

In the above protocol, the $F$ state corresponds to being a node which is not a member of the tree.
$L_i$ corresponds to the \emph{leader} node which acts as the root of the tree, and $O_i$ to non-leader nodes in the tree, where $i$ represents the number of children of a given node.
We assume that for every execution of Protocol 1 on a population $P$ of $n$ nodes, $n-1$ nodes initialise to the state $F$ and one node initialises to the state $L_0$.

\begin{lemma}
Protocol 1 stably constructs the graph language $\mathscr{T} = \{G | G \text{ is a tree and } \forall u \in P \implies \Delta^+(u) \leq 2\}$,
 where $\Delta^+(u)$ is defined as the number of children of the node $u$ in $O(\log{n})$ parallel time.
\end{lemma}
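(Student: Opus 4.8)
The plan is to prove two things: correctness (every fair execution stabilizes to a tree in which each node has at most two children, and conversely every such tree is reachable) and the $O(\log n)$ bound on parallel running time. The key structural invariant is that the active edges always form a tree rooted at the leader. I would verify this by induction on the number of effective transitions: initially the leader is isolated and the invariant holds vacuously. Inspecting the four rules, each effective transition fires only between a node already in the tree (in some state $L_i$ or $O_i$ with $i \le 1$) and a free node $F$, activating the edge between them and relabelling the free node as $O_0$. Thus every effective transition attaches a fresh leaf via a single new active edge, so acyclicity and connectivity of the active subgraph are preserved, and the tree grows by exactly one node per effective transition. The subscript $i$ on $L_i$ and $O_i$ faithfully counts children, and the absence of any rule with $L_2$ or $O_2$ on the left guarantees that no node ever acquires a third child. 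Hence the child-degree bound $\Delta^+(u) \le 2$ is maintained throughout.

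For stabilization, I would argue that the output graph stops changing exactly when every node has been absorbed into the tree, i.e. when no node remains in state $F$. By fairness, as long as some $F$ node and some node with a free slot (state $L_0, L_1, O_0,$ or $O_1$) coexist, the corresponding transition is applicable and must eventually fire; so the process cannot halt with a free node outside the tree unless the tree is already spanning. A short counting check confirms available slots never run out prematurely: with $n$ nodes, a binary tree has ample capacity, and since each internal node offers two slots while consuming one slot of its parent, the total number of free slots strictly exceeds the number of remaining $F$ nodes until the very last one is attached. Therefore the unique stable configuration has all $n$ nodes in the tree, which is a spanning tree with maximum child-degree two, giving membership in $\mathscr{T}$. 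For the reachability half of ``constructs $\mathscr{G}$,'' I would observe that any target tree $T \in \mathscr{T}$ can be built by a suitable order of encounters (attach children in the order dictated by $T$), so every $G \in \mathscr{T}$ admits a constructing execution.

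The running-time analysis is where the real work lies, and I expect it to be the main obstacle. The natural approach is an epidemic/broadcast argument: view the set of tree nodes as ``infected'' and the $F$ nodes as ``susceptible,'' where an effective transition corresponds to an infected node with a free slot meeting a susceptible node. Under the uniform random scheduler, when there are $m$ tree nodes and $n-m$ free nodes, the number of productive interaction pairs is on the order of (free slots) $\times (n-m)$, and crucially the number of free slots is itself $\Theta(m)$ as long as a constant fraction of tree nodes have not yet filled both slots. This yields a doubling behaviour reminiscent of rumour spreading, suggesting the tree size grows geometrically and reaches $n$ in $O(\log n)$ parallel rounds. The hard part will be controlling the possibility that free slots become scarce relative to tree nodes late in the process: if nearly all tree nodes fill both children early, the supply of active slots could drop, slowing growth. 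I would handle this by lower-bounding the number of available slots throughout the epidemic phase and invoking a standard concentration bound (Chernoff) together with a coupon-collector-style argument for the final $O(\log n)$ susceptible nodes, so that both the main spreading phase and the tail take $O(\log n)$ parallel time with high probability, which also bounds the expectation.
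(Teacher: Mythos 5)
Your proposal is correct and, for the correctness half, follows essentially the same route as the paper: the paper's Appendix proves the tree invariant by induction on the connected component (no rule has $L_2$ or $O_2$ on the left), uses the non-decreasing count of open slots plus fairness to get a spanning stable configuration, and exhibits a constructing execution for each $G \in \mathscr{T}$, exactly as you do. The running-time half is where you diverge. The paper gives two arguments: in the main body it lower-bounds the per-step success probability in terms of the number $|U|$ of unconnected nodes and then runs a balls-into-bins computation ($E[Z] = n(1-1/n)^{an\ln n} \leq n^{1-a}$, followed by Markov) to get $O(\log n)$ parallel time w.h.p.; in the Appendix it instead couples the protocol with a worst-case scheduled process $\mathscr{P}$ that keeps the number of available nodes at its minimum $\alpha(T) = \lfloor |T|/2 \rfloor + 1$, computes the expectation exactly as a sum $\sum_i 1/p_i$ with $p_i = \frac{2\alpha(T_i)(n-i)}{n(n-1)}$, and collapses it to $O(n\log n)$ sequential time via harmonic sums, arguing separately that every real execution is at least as fast as $\mathscr{P}$. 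Your epidemic-doubling plus Chernoff plus coupon-collector plan is a valid alternative and directly yields the w.h.p. form of the statement, whereas the paper's coupling yields a clean expectation bound with no concentration inequalities. One important remark: the difficulty you flag as ``the hard part'' --- that open slots might become scarce if many tree nodes fill both children --- is not actually a contingency to be controlled. A tree on $m$ nodes with child capacity $2$ has exactly $2m - (m-1) = m+1$ open slots whatever its shape, hence at least $\lceil (m+1)/2 \rceil$ available nodes; this is precisely the fact your own correctness paragraph derives, and it is what the paper packages as its $\alpha(T)$ lemma. So your phase analysis goes through unconditionally. (A small inaccuracy on the way: your claim that the number of free slots strictly exceeds the number of remaining $F$ nodes is false early in the execution, when $m + 1 < n - m$; but nothing in your argument needs it --- positivity of the slot count plus fairness suffices.)
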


\begin{proof}
A full proof of this theorem is located in the Appendix.
\qed
\end{proof}

\begin{lemma}
For each time step in Protocol 1, the probability of any node in the set of unconnected nodes $U$ connecting to the tree is at least $2\frac{|U|}{n}$.
\end{lemma}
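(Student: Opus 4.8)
The plan is to compute the one-step connection probability exactly and then reduce the claimed inequality to a single structural estimate on the tree. First I would classify the interactions that cause a node of $U$ to join the tree: inspecting $\delta$, these are exactly the encounters between a node in state $F$ and a tree node that still owns a free child-slot, i.e.\ a node in one of the states $L_0, L_1, O_0, O_1$. Writing $A$ for the number of such \emph{available} tree nodes in the current configuration, the number of productive edges is $|U|\cdot A$. Since the uniform scheduler selects one of the $n(n-1)/2$ edges and a connection occurs precisely when that edge is productive, the probability that some node of $U$ connects in the step is exactly
\begin{equation*}
\frac{|U|\cdot A}{n(n-1)/2} = \frac{2\,|U|\,A}{n(n-1)}.
\end{equation*}

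Next I would record the slot bookkeeping that governs $A$. If the tree currently holds $T = n - |U|$ nodes it has $T-1$ edges, and since every node owns exactly two child-slots, the number of unused slots is $2T - (T-1) = T+1 = n - |U| + 1$; this is the whole ``connection budget'' of the present tree, and $A$ is the number of distinct nodes carrying at least one of these unused slots. With the exact expression in hand, the target bound $\tfrac{2|U|}{n}$ is equivalent to the clean statement $A \ge n-1$: indeed $\frac{2|U|A}{n(n-1)} \ge \frac{2|U|}{n}$ holds iff $\frac{A}{n-1}\ge 1$. So the entire lemma reduces to lower-bounding the number of distinct tree nodes that still have an open slot, and my proof would be organised around supplying exactly that estimate and substituting it into the displayed identity.

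The hard part — and where I expect the real obstacle to sit — is precisely this estimate on $A$. The slot budget is only $n - |U| + 1$, and a single node may absorb two slots at once, so a naive count allows the open slots to cluster on as few as $\lceil (n-|U|+1)/2\rceil$ distinct nodes; moreover $A$ can never exceed the number of tree nodes $n-|U|$. Consequently any route that bounds $A$ from the slot total alone falls short of $n-1$, and the target inequality really demands control of the \emph{shape} of the tree (how open slots are spread across distinct nodes) rather than of slot totals. The crux is therefore a structural invariant guaranteeing that almost every tree node still retains an open slot at the moment of interest.

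To close the gap I would try to establish such an invariant for the process driven by the random scheduler — for instance that the frontier of slot-bearing nodes stays large throughout the growth, so that $A$ tracks $n-|U|$ up to lower-order terms — and then plug the resulting lower bound on $A$ into the exact probability above to recover $\tfrac{2|U|}{n}$. Making that invariant precise, and in particular handling the early regime where $|U|$ is large and the tree is small (so that $A$ is necessarily modest), is the step I expect to dominate the argument; everything else is the routine substitution set up in the first two paragraphs.
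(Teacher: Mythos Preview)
Your reduction is correct and, in fact, already shows that the lemma as stated cannot hold. You compute the one-step probability exactly as $\dfrac{2\,|U|\,A}{n(n-1)}$ and observe that the claimed bound $\dfrac{2|U|}{n}$ is equivalent to $A\ge n-1$. But you also note that $A$ cannot exceed the number of tree nodes $n-|U|$, so $A\ge n-1$ is impossible whenever $|U|\ge 2$. That is not a ``hard part'' to be filled in later; it is a contradiction. Concretely, at the very first step $|U|=n-1$ and $A=1$, so the true probability is $2/n$ while the asserted lower bound is $2(n-1)/n\approx 2$. No structural invariant on the tree can repair this, because the obstruction is simply that the tree is small, not that open slots are badly distributed.

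The paper's own argument does not confront this. It silently replaces $A$ by $|S|$ (tacitly treating every tree node as available, which drops the two-child constraint), writes the probability as $|S|\,|U|/n^2$, and then appeals to a ``symmetry'' between the regimes $|S|\ge n/2$ and $|S|<n/2$; even granting that substitution, the displayed arithmetic does not produce $2|U|/n$. The analysis that actually supports the $O(\log n)$ running time sits in the Appendix, where the authors prove $\alpha(T)\ge \lfloor |T|/2\rfloor+1$ and feed this directly into a coupon-collector sum for the expected time, bypassing the present lemma entirely. If you want a true pointwise bound in the spirit of this lemma, the achievable target is of order $\dfrac{|S|\,|U|}{n^2}$ (symmetric in $|S|$ and $|U|$, with $A\ge |S|/2$ supplying the availability factor), not $2|U|/n$.
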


\begin{proof}
Assume there are $|S|$ nodes which are connected to the tree. The probability of a node $x \in U$ connecting to the tree is $\frac{|S||U|}{n^2}$.
If there are at least $n/2$ nodes connected to the tree, then $\frac{|S||U|}{n^2} \geq \frac{1/2|U|}{n^2} = 2\frac{|U|}{n}$.
The case where there are less than $n/2$ nodes connected to the tree is symmetrical, meaning that same process happens in reverse for $1 \leq n \leq n/2$.
Therefore $2\frac{|U|}{n}$ is a lower bound of the probability of connecting to the tree.
\qed
\end{proof}

\begin{lemma}
For Protocol 1, the number of time steps until convergence is $O(\log{n})$ w.h.p.
\end{lemma}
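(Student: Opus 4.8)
The plan is to reduce the convergence time to the time a pure-birth process needs to climb from $1$ to $n$, and then control that time with a tail bound for a sum of independent geometric waiting times. First I would pin down what convergence means here. Every effective rule attaches exactly one $F$-node to the current tree through a single fresh active edge, and no rule ever removes a node or deactivates an edge; moreover a tree on $s$ nodes uses only $s-1$ of its $2s$ child-slots, so it always retains at least $s+1$ free slots and can absorb any remaining $F$-node. Hence the output graph changes exactly when an $F$-node joins and stabilises precisely when the last one joins, i.e. when the number $s$ of tree-nodes first reaches $n$. Writing $\tau_s$ for the number of steps spent with exactly $s$ tree-nodes, the running time is therefore $T=\sum_{s=1}^{n-1}\tau_s$.

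Next I would bound the $\tau_s$. By Lemma 2 (and the computation in its proof), regardless of the shape of the current tree the per-step probability of a productive interaction is at least $p_s \coleq \frac{s(n-s)}{n^2}$, so $\tau_s$ is stochastically dominated by $G_s\sim\mathrm{Geom}(p_s)$; a step-by-step coupling then dominates $T$ by $\sum_{s=1}^{n-1}G_s$ with the $G_s$ independent. A partial-fraction evaluation gives the expected sequential time
$$
E\Big[\sum_{s=1}^{n-1}G_s\Big]=\sum_{s=1}^{n-1}\frac{n^2}{s(n-s)}=n\sum_{s=1}^{n-1}\Big(\frac1s+\frac1{n-s}\Big)=2n\,H_{n-1}=O(n\log n).
$$

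For the high-probability statement I would invoke a Chernoff-type tail inequality for sums of independent geometric variables (e.g. Janson): with $\mu=\Theta(n\log n)$ the expectation above and $p_\ast=\min_s p_s=p_1=\Theta(1/n)$, one has $P\big(\sum_s G_s\ge\lambda\mu\big)\le\exp\big(-p_\ast\mu(\lambda-1-\ln\lambda)\big)$ for $\lambda\ge1$. Since $p_\ast\mu=\Theta(\log n)$, taking $\lambda$ a large enough constant makes the bound $n^{-\Omega(1)}$; thus $T=O(n\log n)$ sequentially w.h.p., and dividing by the $\Theta(n)$ interactions per parallel round yields the claimed $O(\log n)$ parallel time w.h.p.

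The delicate point is the concentration rather than the expectation: $p_s$ swings from $\Theta(1/n)$ at the two extremes $s\approx1$ and $s\approx n$ up to $1/4$ near $s=n/2$, so no single Chernoff estimate on a fixed-length window succeeds uniformly, and the slow, high-variance contributions sit at both ends of the range. The argument nevertheless closes because $p_\ast\mu=\Theta(\log n)$ supplies exactly the large-deviation budget needed for polynomially small tails. A self-contained route that avoids the black-box inequality is to split the range at $s=n/2$: the upper half is essentially coupon collection, where each of the $u=n-s$ surviving $F$-nodes is absorbed with probability $\Omega(u/n)$ per step, so a union bound over the $n$ ``coupons'' gives $O(n\log n)$ w.h.p.\ directly; the lower half is a sum of geometrics with the same multiset of rates $\{1/n,\dots,(n/2)/n\}$ and is handled identically.
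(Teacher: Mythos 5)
Your proposal is correct, but it takes a genuinely different route from the paper's proof of this lemma. The paper argues by coupling the protocol with a balls-into-bins process---time steps are balls, unconnected nodes are bins---then bounds the expected number of empty bins after $m = an\ln n$ sequential steps by $n(1-1/n)^m \leq n^{1-a}$ and finishes with Markov's inequality; no epoch decomposition or geometric variables appear there. Your argument instead mirrors the paper's \emph{Appendix} analysis of the \emph{expected} running time (the same decomposition into epochs $\tau_s$, domination by geometrics with rates $\Theta\bigl(s(n-s)/n^2\bigr)$, and the partial-fraction evaluation giving $2nH_{n-1}$), and then upgrades it to a w.h.p.\ statement via Janson's tail bound for sums of independent geometrics, with an elementary fallback that splits at $s = n/2$ and treats each half as coupon collection. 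What each buys: the paper's route is shorter and needs only Markov's inequality, but it hinges on the domination ``the real process succeeds at least as often as the bins process in every step,'' which rests on Lemma 2's stated bound of $2|U|/n$---a bound that in fact fails when the tree is small (at $|S| = 1$ the true success probability is $\Theta(1/n)$, whereas an empty bin is hit with probability close to $1$), so the coupling direction is broken at the start of the execution. Your version tracks the size-dependent rate $p_s$ exactly, so the slow phases at \emph{both} ends of the range are genuinely accounted for---the budget $p_*\mu = \Theta(\log n)$ is precisely what absorbs them---making your proof the more watertight of the two, at the modest cost of invoking (or reproving via your splitting trick) a Chernoff-type inequality for geometric sums.
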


\begin{proof}
Consider the scenario where $m$ balls are being thrown into $n$ bins.
If $Z$ is the random variable for the number of empty bins, then $E[Z] = n(1 - 1/n)^m$.
The probability of a ball entering an empty bin is $e/n$, there e is the number of empty bins.
For our protocol scenario, balls are time steps and bins are unconnected nodes.
So $m = an\ln{n}$ is the number of balls, and $n$ is the number of bins.
Since the probabilty of success in the balls and bins scenario is lower than $2\frac{|U|}{n}$ when $|U| = e$, we can use it as a bound for the probability of success.
Therefore, $E[Z] = n(1 - 1/n)^{an\ln{n}} \leq  ne^{-an\ln{n}} = n^{1-a}$.
Using Markov's inequality, $E[Z \geq 1] \leq E[Z] = \frac{1}{n^a}$.
Since $a$ can be set arbitrarly high, convergence is O($\log{n}$) w.h.p.
\qed
\end{proof}

\begin{theorem}
Protocol 1 stably constructs the graph language $\mathscr{T}$ in $O(\log{n})$ time w.h.p.
\end{theorem}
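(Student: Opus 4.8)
The plan is to split the statement into a correctness claim---that Protocol~1 stably constructs exactly the language $\mathscr{T}$---and a timing claim---that convergence occurs within $O(\log n)$ parallel time w.h.p.---and then to combine them. The heart of correctness is a structural invariant maintained along every execution: at every step the set of active edges forms a tree rooted at the unique leader node, and every node has at most two children. I would prove this by induction on the number of steps. Initially it holds vacuously (one $L_0$ node, no active edges). For the inductive step one checks the four effective transitions: each has the form $(X,F,0)\ra(X',O_0,1)$ with $X\in\{L_0,L_1,O_0,O_1\}$, so it attaches a previously free node as a new leaf via a single new active edge, incrementing the child-count of $X$ by one. No rule ever creates an edge between two nodes already in the tree, deletes an edge, or raises a child count beyond two (states $L_2,O_2$ are terminal with respect to adding children). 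Hence the invariant is preserved and no cycle is ever formed.

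Next I would establish progress and stability. Since the only effective transitions consume a node in state $F$, once no $F$ node remains the configuration is output-stable and its output graph is a spanning tree of maximum out-degree two, i.e.\ a member of $\mathscr{T}$. It remains to argue that the population never deadlocks while free nodes survive: a tree on $t$ nodes carries $t-1$ edges, hence $t-1$ child-relations against a capacity of $2t$, leaving $t+1\ge 2$ free slots whenever $t\ge 1$. Thus there is always a node in an accepting state $L_0,L_1,O_0$ or $O_1$, and by fairness it eventually meets some surviving $F$ node; iterating, every free node is absorbed and the execution converges to a spanning tree. For the reverse inclusion required by the ``constructs a graph language'' definition, I would exhibit, for an arbitrary target $G\in\mathscr{T}$, an interaction order that attaches free nodes consistently with a fixed embedding of $G$, which a fair scheduler admits.

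For the running time I would identify the convergence step with the step at which the last $F$ node is absorbed, which is legitimate precisely because the invariant guarantees that no effective transition occurs afterwards. This reduces the timing question to bounding the number of steps until the set $U$ of unconnected nodes empties, which is exactly Lemma~3: using the per-step success probability $2|U|/n$ of Lemma~2 as a lower bound and coupling the absorption process with throwing $m=an\ln n$ balls into $n$ bins, the expected number of still-unconnected nodes after $an\ln n$ sequential steps is at most $n^{1-a}$, so by Markov's inequality the probability that any node remains free is at most $n^{1-a}$. Choosing $a$ as a large enough constant makes this $o(1)$, giving convergence in $an\ln n$ sequential, i.e.\ $a\ln n = O(\log n)$ parallel, time with high probability. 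Combining this with the correctness argument yields the theorem.

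The step I expect to be the main obstacle is making the probabilistic coupling fully rigorous: Lemma~2 lower-bounds the per-step connection probability by $2|U|/n$, but to transfer the balls-into-bins expectation to the protocol one must argue this bound dominates the bins' success probability uniformly throughout the execution, and that the resulting expected-count bound legitimately upgrades, via Markov, to the claimed high-probability statement. By contrast the structural invariant is a routine finite case analysis over the four transitions, and progress follows directly from the slot-counting argument together with fairness.
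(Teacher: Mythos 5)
Your proposal is correct and takes essentially the same route as the paper: your structural invariant, slot-counting progress argument, and reachability construction for arbitrary $G\in\mathscr{T}$ are precisely the appendix lemmas behind the paper's Lemma~1, and your timing argument is the paper's Lemmas~2--3 (the $2|U|/n$ per-step bound, the balls-into-bins coupling with $m=an\ln n$, and Markov's inequality with $a$ chosen large). The only difference is presentational: the paper's proof of the theorem is a one-line citation of Lemmas~1 and~3, with correctness deferred to the Appendix, whereas you spell out both halves explicitly.
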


\begin{proof}
By application of Lemmas 1 and 3.
\qed
\end{proof}

\subsection{$k$-Children Spanning Tree}
\label{sec:k-spanning-tree}

We now consider the problem of constructing the graph language $\mathscr{T}_k= \{G | G \text{ is a rooted tree and }\forall u \in P \implies \Delta^+(u) \leq k\}$.
Protocol 2 below operates in the same way as Protocol 1 but relaxes the finite-state restriction to provide states and rules for all $i \leq k$, where $k \geq 2$.

\vspace{-5mm}

\floatname{algorithm}{Protocol}
\renewcommand{\algorithmiccomment}[1]{// #1}
\begin{algorithm}[H]
\begin{minipage}[b]{0.62\textwidth}
  \caption{\emph{k-Slot protocol}}\label{prot:gline}
  \begin{algorithmic}
    \medskip
    \State $Q=\{F, L_0, L_1, \ldots, L_k, O_0, O_1, \ldots, O_k\}$
    \State $\delta$:
    \begin{align*}
    (L_x, F, 0) &\ra (L_{x+1}, O_0, 1) \text{ for } x < k\\
    (O_y, F, 0) &\ra (O_{y+1}, O_0, 1) \text{ for } y < k
    \phantom{\hspace{10cm}}
    \end{align*}
  \end{algorithmic}
  \end{minipage}
\end{algorithm}

\vspace{-5mm}

\begin{lemma}
Under Protocol 2, the connected component $S$, defined as the leader node and all nodes connected to the leader either directly or indirectly through some other nodes is eventually spanning.
\end{lemma}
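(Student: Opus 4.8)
The plan is to show that free nodes get absorbed into $S$ one at a time, that the absorption process never deadlocks while any free node remains, and that fairness therefore drives $S$ to span the entire population.

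First I would set up a structural invariant and prove it by induction on the execution length: at every reachable configuration $S$ is a tree and every node still in state $F$ is isolated (incident to no active edge). Initially $S=\{x\}$ for the unique leader $x$ (state $L_0$) and all other nodes are isolated $F$ nodes, so the base case holds. The only effective transitions are $(L_x,F,0)\to(L_{x+1},O_0,1)$ and $(O_y,F,0)\to(O_{y+1},O_0,1)$, and each one attaches a previously isolated $F$ node as a new leaf of $S$ through a single fresh active edge. Attaching a leaf cannot create a cycle, so $S$ stays a tree, and no existing $F$ node gains an edge, so the isolation property is preserved.

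Next I would prove a capacity lemma: whenever $S$ has $s$ nodes and is not yet spanning, some node of $S$ still has a free child slot. Since $S$ is a tree on $s$ nodes it has exactly $s-1$ active edges, so $s-1$ of the $sk$ available child slots are occupied, leaving $s(k-1)+1$ free slots. The leader always lies in $S$, so $s\geq 1$, and as $k\geq 2$ this count is at least $2$; hence there is always a node $v\in S$ in a state $L_x$ or $O_y$ with fewer than $k$ children.

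Finally I would close the argument with a monovariant-plus-fairness step. Let $f(C)$ be the number of $F$ nodes in $C$; no transition ever creates an $F$ node, so $f$ is non-increasing along the execution, hence eventually constant at some value $f^{*}$. If $f^{*}>0$, then because the configuration space is finite some configuration $C$ with $f(C)=f^{*}$ recurs infinitely often; by the capacity lemma $C$ contains a node $v\in S$ with a free slot, and by the invariant it contains an isolated $F$ node $u$, so the encounter $uv$ (whose edge is inactive) enables an effective transition to a configuration $C'$ with $f(C')=f^{*}-1$. Fairness forces $C'$ to recur, after which $f$ can never return to $f^{*}$, contradicting that $f$ is constant at $f^{*}$. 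Therefore $f^{*}=0$ and $S$ is spanning. I expect the main obstacle to be the careful bookkeeping in the structural invariant — specifically guaranteeing that $S$ remains acyclic (so the edge count $s-1$, and thus the slot count, is exact) and that the edge between a tree node and a free node is always inactive at the moment they interact; once that invariant is in place the remaining steps are routine.
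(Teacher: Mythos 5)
Your proposal is correct and follows essentially the same route as the paper: establish that the tree always retains at least one open child slot, then invoke fairness to conclude every free node is eventually absorbed. The paper's own proof is far terser (it merely notes the open-slot count starts at $k$, is non-decreasing, and asserts absorption "at some point"), so your exact slot count $s(k-1)+1$ and the explicit monovariant-plus-recurrence fairness argument supply rigor the paper leaves implicit rather than a different idea.
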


\begin{proof}
We observe that the number of open slots $o$ is initially $k$.
$o$ is non-decreasing, as every increase in $\Delta^+(u)$ for some $u$ necessarily increases $|V(S)|$.
Since there are always open slots available, every unconnected node is guaranteed to be able to connect to $S$ at some point.
Therefore when $S$ stabilises it will contain all $u \in P$.
\qed
\end{proof}

\begin{lemma}
For all executions of Protocol 2 on the population $P$ of n nodes, it stabilizes to some $G \in \mathscr{T}_k$ where $|V(G)| = n$.
\end{lemma}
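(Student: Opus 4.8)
The plan is to establish three properties of the stabilized output graph $G$: that it is connected and acyclic (hence a tree), that every node has at most $k$ children, and that it spans all $n$ nodes. Since the preceding lemma already guarantees that the component $S$ rooted at the leader is eventually spanning, the property $|V(G)| = n$ comes essentially for free; the remaining work is to verify the tree structure and the degree constraint, and then to argue that the configuration actually stabilizes.

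First I would prove the central structural invariant: at every step of any execution, the active edges restricted to $S$ form a tree. I would argue this by induction on the number of effective transitions applied. Initially $S$ consists solely of the leader in state $L_0$ with no active edges, which is trivially a tree. The inductive step uses the crucial observation that every effective transition of Protocol 2 has the form $(L_x, F, 0) \to (L_{x+1}, O_0, 1)$ or $(O_y, F, 0) \to (O_{y+1}, O_0, 1)$: in each case exactly one endpoint is a free node $F$, hence not yet in $S$, while the other is already a tree node, and the interaction activates precisely one edge while moving the free node into $S$ as a fresh leaf $O_0$. Thus each effective transition increases $|V(S)|$ by one and the number of active edges by one, attaching a brand-new vertex to the existing tree. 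A graph that starts as a single vertex and repeatedly gains one vertex together with one incident edge can never close a cycle, so $S$ remains connected and acyclic throughout.

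Next I would handle the degree bound and stabilization. The degree constraint $\Delta^+(u) \le k$ is immediate from the state design: a node's child count is encoded by its subscript, and the rules only permit the increments $L_x \to L_{x+1}$ and $O_y \to O_{y+1}$ for $x, y < k$, so no node can pass $L_k$ or $O_k$ and therefore no node ever acquires more than $k$ children. For stabilization, I would invoke the preceding lemma to conclude that $S$ eventually contains all $n$ nodes; at that point no node remains in state $F$, and since every effective transition requires an $F$ endpoint over an inactive edge, no further effective transition is possible. Hence the output graph ceases to change and is output-stable. Combining the invariant evaluated at this final configuration, $S$ is a tree on all $n$ vertices in which every node has at most $k$ children, i.e. $G \in \mathscr{T}_k$ with $|V(G)| = n$.

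The main obstacle is less a matter of difficulty than of care: the acyclicity argument must rule out any edge ever being activated between two nodes already lying in $S$, which hinges on the precise fact that each rule consumes exactly one $F$-labelled endpoint over an inactive edge. I would also take care that the output-state set is chosen so that all tree-member states $L_i$ and $O_i$ lie in $Q_{out}$, ensuring that once $S$ is spanning the output graph coincides with the constructed tree rather than with a proper subgraph of it.
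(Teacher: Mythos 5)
Your proposal is correct and takes essentially the same route as the paper: an induction on the growth of the leader's component $S$, driven by the observation that every effective rule consumes exactly one $F$-endpoint over an inactive edge and attaches it as a fresh leaf $O_0$ to a node with fewer than $k$ children, so $S$ stays a tree satisfying the degree bound until it spans the population. Your write-up is in fact somewhat more careful than the paper's, which leaves acyclicity and the final stabilization step (no $F$ nodes remaining implies no effective transitions remain) implicit, and which contains a copy-paste slip referring to ``two children'' and states $O_2$, $L_2$ where it should say $k$ children and $O_k$, $L_k$.
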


\begin{proof}
We prove this via an induction on the connected component $S$.
For the base case, there is one node in the state $L_0$. This is trivially a member of $\mathscr{T}_k$ as no connections have formed yet. We now assume that there is a connected component of size $|S|$.
For a connected component of size $|S| + 1$, an unconnected node $u \in V\setminus{S}$ in the state $F$ must connect to $S$ at some node $x \in S$.
By Lemma 3, such a node must exist. If the node $x$ has two children it is in the state $O_2$ or $L_2$, as for all nodes in states $O_i$ and $L_j$ the $i$ and $j$ correspond to the number of children of those nodes.
Since there is no defined transitions from these states no $u$ can connect to $x$.
\noindent
Therefore $S$ remains a tree and $G(S) \in \mathscr{T}_k$.
\qed
\end{proof}

\begin{lemma}
For all $G \in \mathscr{T}_k$, there is an execution of Protocol 2 which stabilises on $G$ when starting on a population $P$ of size $n = |V(G)|$.
\end{lemma}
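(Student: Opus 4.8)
The goal is to prove that Protocol 2 is \emph{complete} in the sense required by the definition of constructing a graph language: for every target tree $G \in \mathscr{T}_k$ there exists \emph{some} execution of the protocol (i.e. some fair interaction sequence chosen by the scheduler) that stabilises exactly to a graph isomorphic to $G$. This is the reachability/surjectivity direction, complementing Lemma~5 which establishes that \emph{every} execution produces a member of $\mathscr{T}_k$.

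The plan is to argue constructively by exhibiting, for a given $G$, an explicit interaction schedule that builds $G$ edge by edge. First I would fix a root $r$ of $G$ (the rooted tree structure of members of $\mathscr{T}_k$ gives a canonical choice) and identify $r$ with the unique leader node, which begins in state $L_0$; all other nodes begin in state $F$. I would then choose any ordering of the nodes of $G$ consistent with a breadth-first or depth-first traversal from $r$, so that each non-root node $v$ is encountered \emph{after} its parent has already joined the component $S$. The key step is the induction: processing the nodes in this parent-before-child order, at each stage I select the interaction between the (already-placed) parent node, currently in some state $L_x$ or $O_y$ with $x,y < k$, and an arbitrary still-unplaced node in state $F$. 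By the transition rules $(L_x,F,0)\ra(L_{x+1},O_0,1)$ and $(O_y,F,0)\ra(O_{y+1},O_0,1)$, this activates exactly the desired parent--child edge, increments the parent's child-counter, and brings the child into the component as a fresh $O_0$ node ready to acquire its own children.

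I would then verify that this schedule is legal and terminates in the correct configuration: because $G \in \mathscr{T}_k$ every node has at most $k$ children, so each parent's counter never needs to exceed $k$ and the precondition $x<k$ (resp.\ $y<k$) holds at every step where we add a child; thus every required transition is defined. After all $n$ nodes have been placed, the active edges are precisely the parent--child edges of $G$, so the output graph is isomorphic to $G$. Finally I would note that once this configuration is reached no further effective transition is applicable — every remaining node is in an output state and no $F$ nodes remain — so the output is stable and the execution has converged to $G$. To make the exhibited execution a genuine (infinite, fair) execution rather than a finite prefix, I would extend it by appending arbitrary interactions; since all reachable transitions from the final configuration are ineffective, the output-graph remains $G$ forever, and fairness can be satisfied by cycling through all pairs.

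The main obstacle, and the only point requiring care, is confirming that the chosen traversal order truly guarantees each parent is already in the component with a free slot at the moment its child is added. This is where the bound $\Delta^+(u)\le k$ is essential: a parent-before-child ordering always exists for a rooted tree, and the degree bound ensures the relevant counter has not yet saturated, so the corresponding rule fires. Establishing the isomorphism between the constructed active subgraph and $G$ is then a routine matching of the edges added during the construction to the edges of $G$.
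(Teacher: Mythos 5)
Your proposal is correct and takes essentially the same approach as the paper: both exhibit an explicit interaction schedule that builds $G$ via a parent-before-child (breadth-first) traversal from the root, identifying the root with the leader and attaching still-unplaced $F$ nodes to their already-placed parents using the rules $(L_x,F,0)\ra(L_{x+1},O_0,1)$ and $(O_y,F,0)\ra(O_{y+1},O_0,1)$. Your write-up is in fact somewhat more careful than the paper's, which omits the explicit checks you include on the preconditions $x,y<k$, on the stability of the final configuration, and on extending the finite schedule to a fair infinite execution.
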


\begin{proof}
We first set the value of $k$ to the maximum number of connections in any node in the tree.
Let the leader node $l$ in the population $P$ correspond to the root $r$ of $G$. If $r$ has $i$ children, connect $i$ nodes in the state $F$ to $l$.
For each child $c$ of the leader node, let it correspond to a child $d$ of $r$. If $d$ has $j$ children, connect $j$ nodes in the state $F$ to $c$.
Continuing this process for all nodes $u \in G$, the result is a spanning tree where all nodes in the tree are equivalent to some $u \in G$.
\qed
\end{proof}

\begin{theorem}
Protocol 2 stably constructs the graph language $\mathscr{T}_k$ in $O(\log{n})$ time w.h.p.
\end{theorem}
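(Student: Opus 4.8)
The plan is to mirror the two-part structure used for Protocol~1: first establish correctness, namely that Protocol~2 stably constructs $\mathscr{T}_k$, and then bound the running time by a balls-and-bins argument. For correctness I would simply combine the three structural lemmas already proved for Protocol~2. The lemma showing that every execution stabilises to some $G \in \mathscr{T}_k$ with $|V(G)| = n$, together with the lemma showing that every $G \in \mathscr{T}_k$ admits an execution stabilising on it, are precisely the two conditions in the definition of a protocol constructing a graph language, so Protocol~2 constructs $\mathscr{T}_k$. The lemma establishing that the component $S$ is eventually spanning additionally guarantees that no execution stabilises to a proper subtree.

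The substantive work is the time bound, which I would obtain by generalising the connection-probability estimate of Lemma~2 to arbitrary $k$ and then reusing the balls-and-bins analysis of Lemma~3 almost verbatim. The first step is to lower-bound, at any configuration with $|S|$ nodes already in the tree and $|U| = n - |S|$ unconnected nodes, the probability that some node of $U$ connects in the next step. A tree node is \emph{full} precisely when it has $k$ children, and since the tree has $|S|-1$ edges the number of full nodes is at most $(|S|-1)/k$; hence at least $|S| - (|S|-1)/k \ge |S|(k-1)/k$ tree nodes still carry an open slot. For $k \ge 2$ this is at least $|S|/2$, so the number of productive interactions (an $F$-node meeting a non-full tree node) is at least $\tfrac{1}{2}|S|\,|U|$, giving a per-step success probability of at least $\tfrac{|S||U|}{n(n-1)} \ge \tfrac{|S||U|}{n^2}$, the same bound as in Lemma~2 and in fact only larger when $k > 2$. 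As in Lemma~2 I would treat the regimes $|S| \ge n/2$ and $|S| < n/2$ symmetrically to extract a clean bound of the form $\Omega(|U|/n)$.

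With this per-step bound in hand, the balls-and-bins argument of Lemma~3 carries over unchanged: identifying time steps with balls and the $n$ nodes with bins, and substituting the success probability above for the balls-and-bins success probability, the expected number of still-unconnected nodes after $an \ln n$ steps is at most $n^{1-a}$, and Markov's inequality drives this below $1$ with probability at least $1 - n^{-a}$ for any constant $a$. Choosing $a$ sufficiently large therefore yields convergence in $O(\log n)$ parallel time with high probability.

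I expect the main obstacle to be exactly the generalisation in the second paragraph. For $k = 2$ every non-full node has a single open slot, but for larger $k$ one must argue that the open slots are spread across enough \emph{distinct} nodes rather than being concentrated, since only the count of distinct non-full nodes, not the raw slot count, governs how many interacting pairs are productive. The full/non-full counting above is what resolves this, and because it shows the available-node count never drops below the $k = 2$ value, the remainder reduces to the already-established Lemma~3. A minor additional care is needed to confirm, using the spanning lemma for Protocol~2, that open slots never run out, so that the $\Omega(|U|/n)$ bound genuinely holds at every step until the tree is spanning.
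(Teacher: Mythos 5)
Your proposal is correct and follows the paper's overall skeleton (correctness from the three structural lemmas, then a reduction to the Lemma~3 balls-and-bins bound), but your treatment of the time bound is a genuinely different and more substantive argument than the paper's. The paper's entire proof of the time bound is the one-line claim that Protocol~2 ``can only be faster than Protocol~1 as it has more open slots per node,'' i.e.\ an informal stochastic-dominance appeal to the $k=2$ analysis. You instead re-derive the per-step success probability directly for general $k$: since a tree on $|S|$ nodes has $|S|-1$ edges, at most $(|S|-1)/k$ of its nodes can be full, so at least $|S|(k-1)/k \geq |S|/2$ tree nodes remain available, which recovers (and for $k>2$ improves) the $k=2$ bound before handing off to Lemma~3. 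Notably, the subtlety you flag in your last paragraph --- that what governs the success probability is the number of \emph{distinct} non-full nodes, not the raw slot count --- is precisely the gap in the paper's ``more open slots per node'' justification, and your counting argument is exactly what is needed to make that dominance claim rigorous. The trade-off: the paper's proof is shorter but rests on an unproven monotonicity assertion; yours is self-contained and supplies the missing combinatorial fact. Both versions inherit the same imprecision from the paper's Lemmas~2 and~3 (the ``symmetric'' handling of the regime $|S| < n/2$, where the bound $\Omega(|U|/n)$ does not literally hold), so to the standard the paper sets for itself your argument is at least as complete.
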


\begin{proof}
By application of the Lemmas above.
Protocol 2 can only be faster than Protocol 1 as it has more open slots per node.
\qed
\end{proof}

\section{Time Thresholds for $(l,k)$-Regular Networks}
\label{sec:polylog}

In this section, we present our solution for the $(l,k)$-Regular Network problem for $l = k - 1$, the \emph{Cross-edges Tree} protocol.
We first show that a \emph{$k$-regular network}, defined as a network where each node has degree exactly equal to $k$, cannot be constructed in polylogarithmic time.
We then show via experimental analysis that this impossibility result does not hold for the minimal relaxation of $(l,k)$-Regular Networks when $k$ is a constant and $l = k - 1$.
Finally, we demonstrate that when $k$ exceeds the threshold of $\log\log{n}$, the protocol itself is no longer in the polylogarithmic time class.
Note that from now on $k$ refers to the \emph{degree} of a node, not the \emph{number of children}.

\begin{theorem}
Any protocol which constructs a $k$-regular network where $k < n$ has a running time of $\Omega(n)$. 
\end{theorem}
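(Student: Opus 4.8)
The plan is to exploit a bottleneck at the very last edge that must be toggled before the network becomes $k$-regular. Since $k < n$, the target graph is never complete, so each node has at least one non-neighbour and a correct protocol must genuinely \emph{select} which edges to activate rather than activating them all. First I would introduce a potential $\phi(C)$ equal to the minimum number of single-edge activations/deactivations needed to turn the current output graph $G(C)$ into a $k$-regular graph, i.e. the edit distance from $G(C)$ to the nearest $k$-regular graph. Then $\phi(C) = 0$ exactly when $G(C)$ is $k$-regular, and convergence requires $\phi$ to reach $0$ and stay there. The crucial structural observation is that a single interaction touches only the one edge between the interacting pair, so $\phi$ changes by at most $1$ per step; in particular, the final transition that completes the network must be a step in which $\phi$ drops from $1$ to $0$.

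Second, I would show that every configuration with $\phi = 1$ admits a \emph{unique} completing toggle. Indeed, $\phi = 1$ forces the degree defect to be concentrated on exactly two nodes $u^{*}, v^{*}$ (both deficient by one, or both in excess by one), so the only single edge whose toggle yields a $k$-regular graph is the edge $u^{*}v^{*}$. Hence convergence occurs precisely at a step in which the uniform random scheduler selects this one specific pair out of the $\binom{n}{2}$ possible interactions. As that pair is chosen with probability $\frac{2}{n(n-1)}$ in each step independently, the number of steps spent in $\phi = 1$ configurations before the completing selection stochastically dominates a geometric random variable with this success probability, whose expectation is $\binom{n}{2} = \Theta(n^{2})$. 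Since the convergence time is at least this waiting time, $E[X] = \Omega(n^{2})$ sequential steps, which is $\Omega(n)$ parallel time.

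The main obstacle is making this ``unique forced final toggle'' argument fully rigorous against a protocol that behaves unexpectedly near the end. The delicate case is that the scheduler might, from a $\phi = 1$ configuration, repeatedly perform state-only interactions or toggle other edges instead of completing. I would argue that any edge toggle other than the forced one keeps $\phi \geq 1$ (an extra activation or deactivation moves $G(C)$ at least one toggle away from regularity again), so the execution must return to a $\phi = 1$ configuration and face the same single-pair bottleneck, which only increases the expected time. Formally arguing that the expected length of the final $\phi = 1$ interval is $\Omega(n^{2})$ irrespective of the protocol's internal state transitions — that is, cleanly conditioning on the trajectory of node-states while isolating the edge-selection randomness — is the part that needs care, but the core intuition is simply that the last edge is always forced and the scheduler needs $\Theta(n^{2})$ sequential attempts to guess one specific pair.
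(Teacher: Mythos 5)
Your proof is correct and hinges on the same bottleneck as the paper's own argument: just before completion the degree defect is concentrated on exactly two nodes, so the scheduler must select that one specific pair out of $\binom{n}{2}$, giving expected $\Omega(n^2)$ sequential steps, i.e.\ $\Omega(n)$ parallel time. Where you genuinely differ is in how the bottleneck is established. The paper counts ``connection entry points'' ($kn$ initially, decreasing by two per activation) and deduces that two nodes with one unused entry point each must eventually appear; this tacitly assumes edges are only ever activated, never deactivated, and it does not discuss executions that approach regularity, retreat, and approach again. Your edit-distance potential $\phi$ buys exactly that generality: the characterization of $\phi = 1$ configurations (defect on precisely two nodes, both deficient or both in excess, hence a unique completing toggle) and the remark that any other toggle leaves $\phi \geq 1$ make the argument sound for arbitrary, possibly non-monotone protocols. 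For the conditioning issue you flag, the clean finish is: by minimality of the convergence time $T$, the output graph must change at step $T$, so $\phi(C_{T-1}) = 1$ and step $T$ must select the unique completing pair determined by $C_{T-1}$; since the scheduler's choice at step $T$ is uniform and independent of $C_{T-1}$, we get $\Pr[T = t] \leq \frac{2}{n(n-1)}$ for every fixed $t$, hence $\Pr[T \leq m] \leq \frac{2m}{n(n-1)}$, and taking $m = \lfloor n(n-1)/4 \rfloor$ yields $E[T] \geq m \cdot \Pr[T > m] \geq m/2 = \Omega(n^2)$. One caveat shared by both proofs: the model's output graph is induced by the nodes in output states, so a single interaction could in principle change many output edges at once by moving a node out of $Q_{out}$; like the paper, you implicitly assume the constructed network is spanning with every state in $Q_{out}$.
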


\begin{proof}
Consider the population $P$ of size $n$ using a generic $k$-regular network construction protocol $X$. The number of connections is limited by $k$ to $\frac{kn}{2}$ as this is less than the $\frac{n(n-1)}{2}$ maximum for $n$ nodes. The population initially has $kn$ network connection entry points which can be used to make new connections and which decrease by 2 for every connection made. Since $(kn) \leq n(n-1)$, at some point in the execution there must be two nodes with 1 unused entry point each. Using these points and stabilising the protocol means both nodes must be selected by the scheduler at the same time, an event with probability $\frac{1}{n^2}$. Since an event with probability $\frac{1}{n^2}$ is unavoidable the protocol $X$ must construct a network in at least $\Omega(n^2)$ interactions. 
\qed
\end{proof}

In light of the above impossiblity, we now give our protocol for the $(l,k)$-Regular Network problem when $l = k - 1$.

\vspace{-5mm}

\floatname{algorithm}{Protocol}
\renewcommand{\algorithmiccomment}[1]{// #1}
\begin{algorithm}[H]
\begin{minipage}[b]{0.62\textwidth}
  \caption{\emph{Cross-edges Tree}}\label{prot:gline}
  \begin{algorithmic}
    \medskip
    \State $Q=\{F, L_0, L_1, \ldots, L_k, O_0, O_1, \ldots, O_k\}$
    \State $\delta$:
    \begin{align*}
    (L_x, F, 0) &\ra (L_{x+1}, O_0, 1) \text{ for } x < k\\
    (O_y, F, 0) &\ra (O_{y+1}, O_0, 1) \text{ for } y < k\\
    (L_x, O_y, 0) &\ra (L_{x+1}, O_{y+1}, 1) \text{ for } x, y < (k - 1)\\
    (O_y, O_z, 0) &\ra (O_{y+1}, O_{z+1}, 1) \text{ for } y, z < (k - 1)
    \phantom{\hspace{10cm}}
    \end{align*}
  \end{algorithmic}
  \end{minipage}
\end{algorithm}

\vspace{-5mm}

The Cross-edges Tree protocol adds additional rules allowing leaves within a tree to connect to other nodes within the tree as though they are candidates for becoming children.

We now provide the results of simulating the protocol for $k = 3$.
We used the same conditions as in the other running time experiments, executing the protocol 10 times for each population size $n$, where $n = 10 + 6t$, where $t$ is the test number from 0 to 199.

\begin{figure}[h!]
\includegraphics[width=\textwidth]{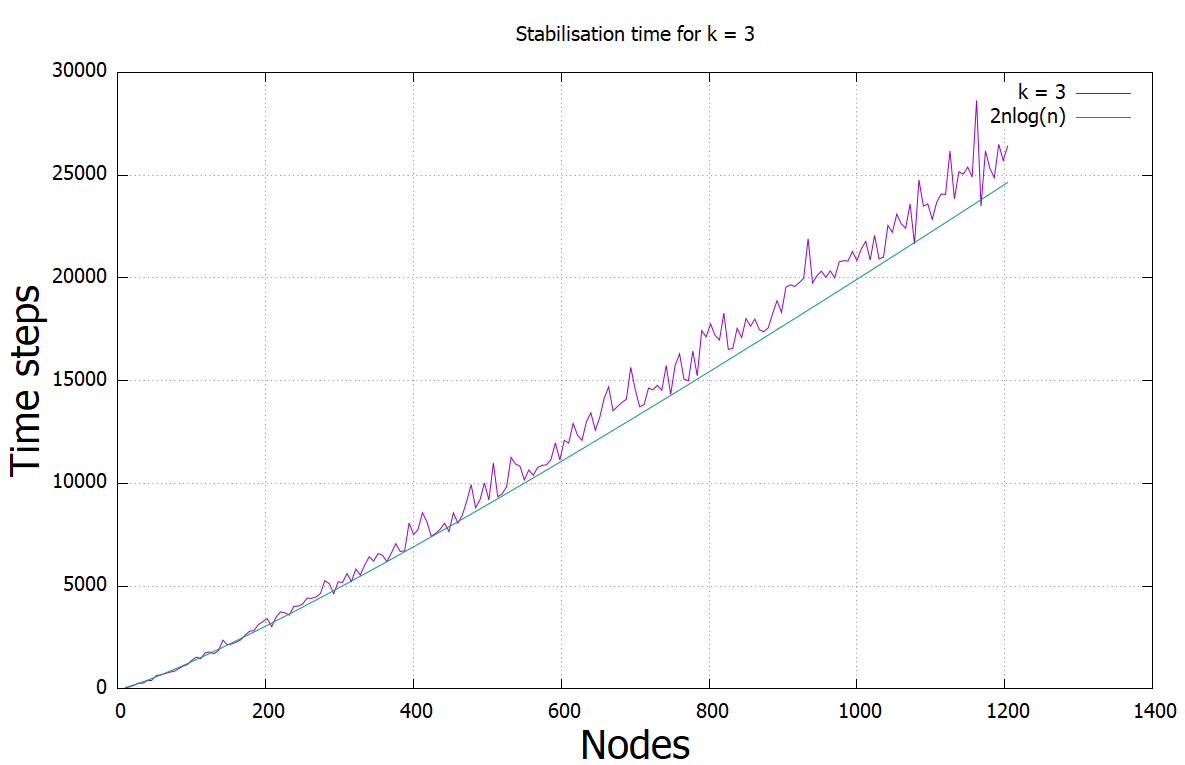}
\caption{Running time of the protocol for k = 3, compared with a polylogarithmic function.} \label{fig1}
\end{figure}

The running time is difficult to prove formally.
This is because random variables are used which represent the number of nodes with a given degree in a given time step.
Their values depend on the values of all random variables in the previous time step.
We therefore turn our focus to experiments based on measuring the impact of the value of $k$ on the running time of the protocol.

We have measured the running time of our Cross-edges Tree protocol for different network sizes.
The results below show that a higher value of $k$ has little effect on the running time until $k$ exceeds $\log\log{n}$.

\begin{figure}[H]
\includegraphics[width=\textwidth]{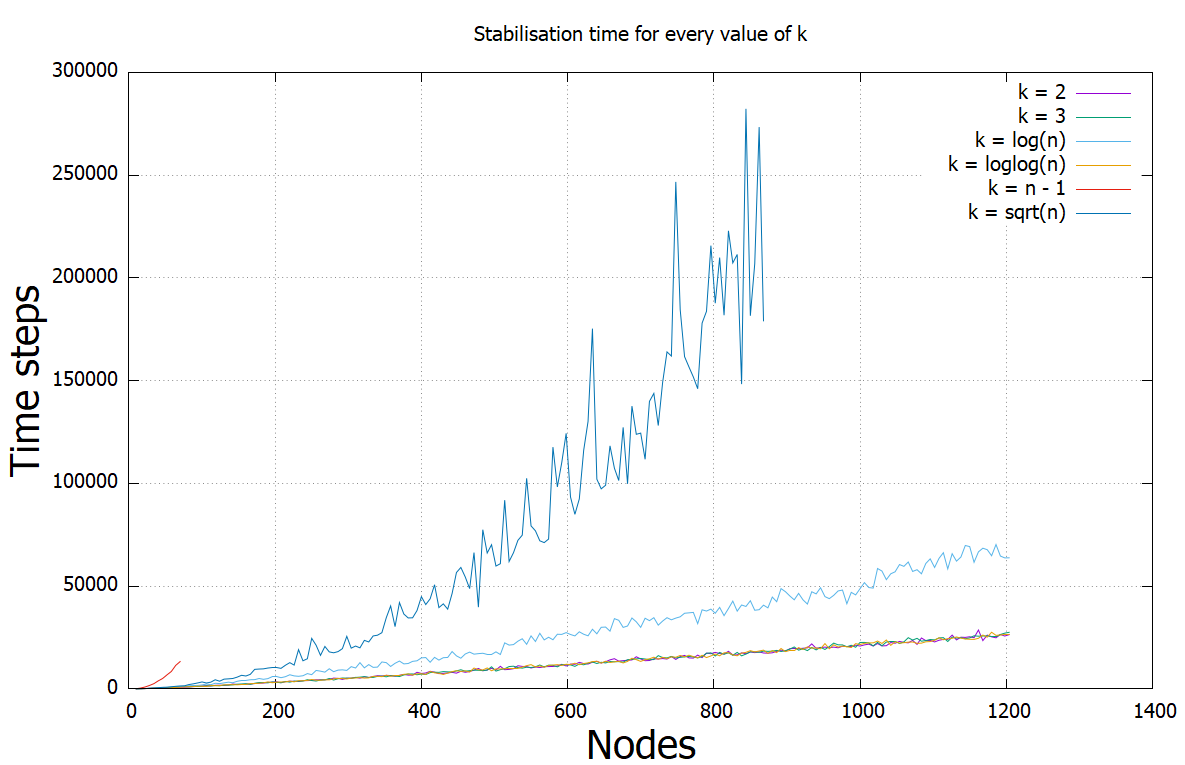}
\caption{The effect of $k$ on the running time of the protocol. A graph which focuses on the lowermost can be found in the Appendix.} \label{fig3}
\end{figure}

\begin{figure}[H]
\includegraphics[width=\textwidth]{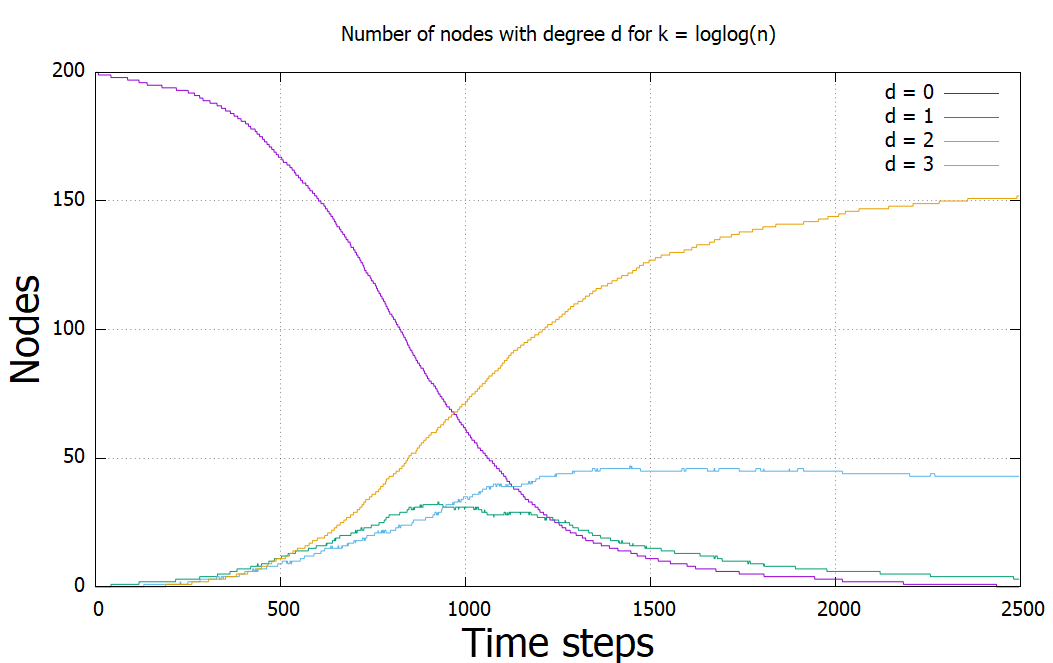}
\caption{The results for $k = \log\log{n}$. Note the difference in the axes labels.} \label{fig5}
\end{figure}

\begin{figure}[H]
\includegraphics[width=\textwidth]{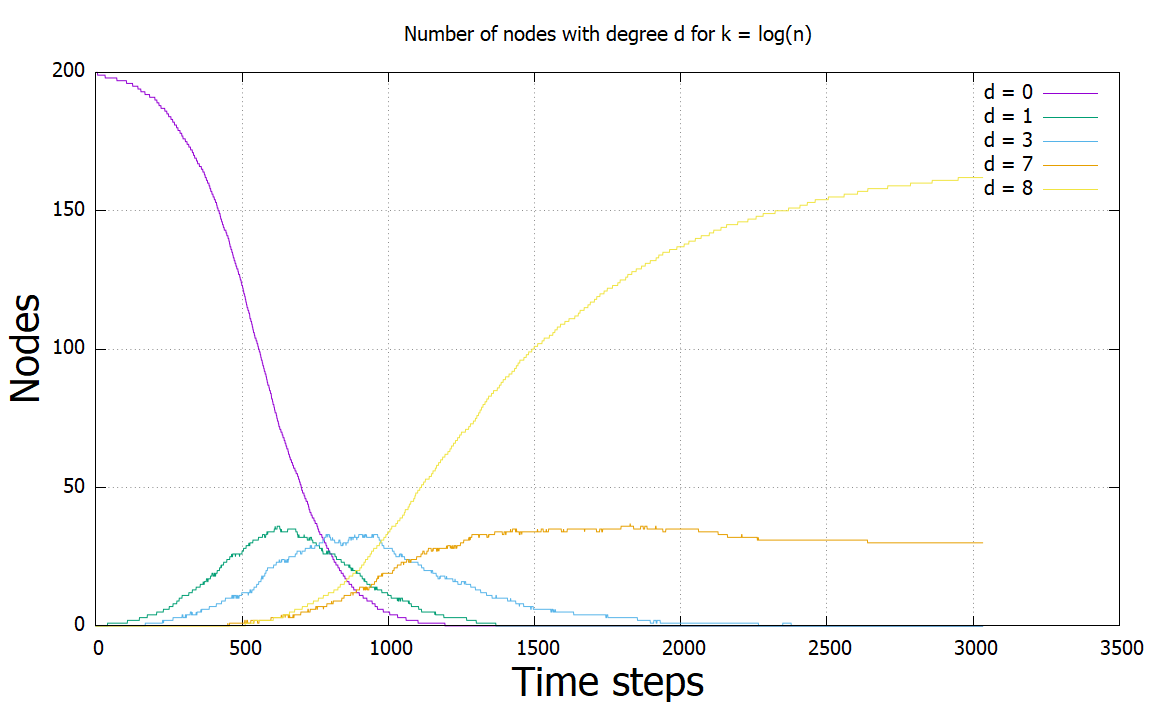}
\caption{The results for $k = \log{n}$. Here we see the beginning of a leftwards shift of the lines, and an upwards shift in $d = k$} \label{fig6}
\end{figure}

\begin{figure}[H]
\includegraphics[width=\textwidth]{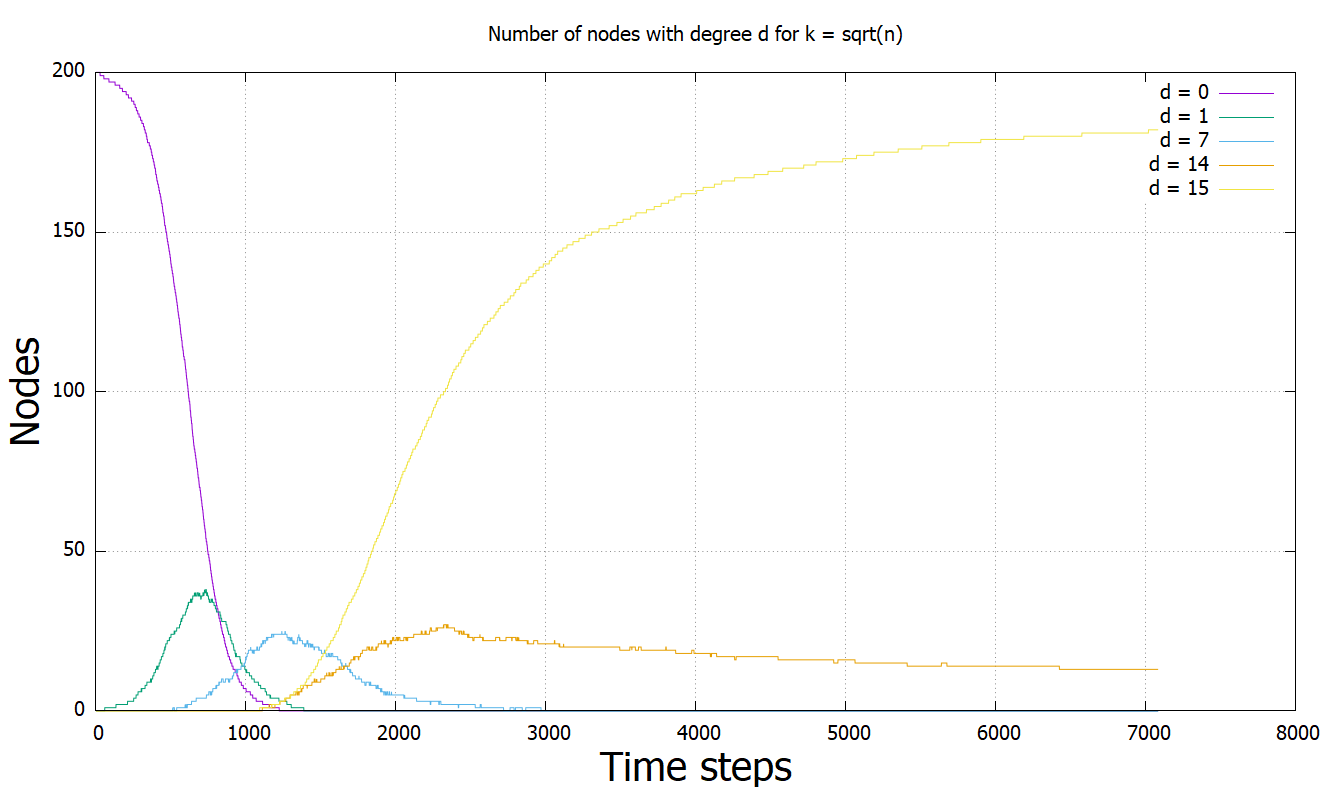}
\caption{The results for $k = \sqrt{n}$. Both shifts are more intense.} \label{fig7}
\end{figure}

To investigate why the protocol slows down dramatically after this point, we ran experiments where we stored the number of nodes with specific degrees in each time step.
We executed the protocol with 200 nodes, and ran 10 iterations.
These degrees were set to $0$, $1$, $k/2$, $k-1$, and $k$.
The results show that the cause seems to be a large reduction in the number nodes which are in the $k-1$ state as $k$ grows as a fraction of $n$.
They suggest that when the fraction of $k-1$ nodes is below some fraction between $1/4$ and $1/8$ of the total the protocol slows down and enters the class of protocols with polynomial time.

\section{Conclusions}
\label{sec:conclusions}

There are a number of open problems to be addressed.
The most important is to develop an exact characterisation of the class of networks which can be constructed in polylogarthimic parallel time.
However, there are other, more immediate problems.
For example, we have yet to investigate the effect that widening the difference between $k$ and $l$ will have on the protocol.
We speculate that this will result in a faster running time in exchange for less uniformity within the resulting spanning network.
We have also speculated about the possibilities of using a leaderless version of our Cross-tree Protocol.
We believe that such a protocol may offer a trade-off between running time and the possibility of forming networks which are spanning, depending on the values of $k$ and $l$.

\newpage

\bibliographystyle{alpha-abr}
\bibliography{algosensors20-sub}

\pagebreak

\appendix

\section*{Appendix}

\section{Notes on the Network Constructor Simulator}

Our paper uses a network constructor simulator to generate results.
It is written in C using CodeBlocks 16.01 as the IDE.
States are generated as a struct containing the ID.
Rules are tuples of the preconditions and effects.
Nodes are also structs with an identifiying index and a pointer to connections stored with allocated memory.\\

The simulator selects two nodes at random, checks for a rule which is applicable and changes the nodes state based on the effects of the rule.
It checks every time step for population stability using a custom condition guaranteed to be fulfiled when the protocol is stable and never otherwise.

\section{Stabilisation Conditions of the $(l, k)$-regular network}

To implement a simulator which can provide results efficiently, we had to define and prove conditions which when fulfilled ensure that the protocol is stable.

\begin{lemma}
For $n > 3$, protocol 3 stabilises with at least 1 node which is not in the state of $k$ in the connected tree.
\end{lemma}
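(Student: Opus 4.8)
The plan is to prove the statement by a purely combinatorial counting argument, comparing the number of nodes in the connected component $S$ with the number of \emph{free-node connections} (applications of the first two rules) that were used to build it; the random dynamics play no role. I would first note that Protocol 3 reaches a stable configuration at all: every effective transition activates a previously inactive edge and never deactivates one or decreases a subscript, so the number of active edges strictly increases and is bounded by $\binom{n}{2}$. Hence only finitely many effective transitions occur and the execution settles on a configuration on which no rule fires. The bound I derive in fact holds in every configuration, so in particular on the stable one.

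The key observation is that a node can enter a state $L_k$ or $O_k$ (subscript $k$) only through an application of one of the first two rules, that is, by connecting to a node in state $F$. The only transitions raising a subscript from $k-1$ to $k$ are $(L_{k-1},F,0)\ra(L_k,O_0,1)$ and $(O_{k-1},F,0)\ra(O_k,O_0,1)$; the two cross-edge rules are guarded by $x,y<k-1$ and $y,z<k-1$ and therefore can never output a subscript above $k-1$. So the final increment bringing any node to subscript $k$ must be a free-node connection, and each such connection increments exactly one node already in the tree while turning one $F$ node into a fresh $O_0$ vertex of $S$.

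I would then count. The leader is present from the start, and every other vertex enters $S$ through exactly one free-node connection, so if $|S|=m$ then exactly $m-1$ free-node connections have occurred over the whole execution. Each such event is the final increment for at most one node, namely the unique node whose subscript it raises. If all $m$ nodes of $S$ were in the state of $k$, each would need a distinct final-increment event, demanding at least $m$ free-node connections and contradicting the count of $m-1$. Therefore at least one node of $S$ has subscript strictly below $k$, which is the claim; the hypothesis $n>3$ ensures the tree is non-trivial (and, adapting the earlier spanning lemma for Protocol 2, eventually spanning), so this is a genuine tree node.

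The step I expect to be the main obstacle is making the bookkeeping airtight. I must verify that cross edges only ever join two vertices already inside $S$, so they add no new vertices and never act as a final increment to subscript $k$, and that each free-node connection adds exactly one vertex to $S$ and raises exactly one existing subscript. Pinning down the identity $|S|=1+(\text{number of free-node connections})$ together with the fact that one connection is a final increment for at most one vertex is where the care lies; once these are established the contradiction $m\le m-1$ follows at once.
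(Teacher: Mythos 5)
Your proof is correct, and it takes a genuinely different route from the paper's. The paper argues by an extremal ("last step") principle: if every tree node were at subscript $k$, the final effective transition could not be a free-node rule (it would create a fresh $O_0$ vertex), so it would have to be a cross-edge rule simultaneously lifting two $O_{k-1}$ nodes to $O_k$ --- impossible, since those rules are guarded by subscripts $< k-1$. You instead globalize the same key observation (only free-node rules can perform the increment from $k-1$ to $k$) into a counting invariant: every non-leader vertex enters $S$ through exactly one free-node connection, so there are exactly $|S|-1$ such connections, while $m$ nodes at subscript $k$ would require $m$ distinct final-increment events, each necessarily a free-node connection raising exactly one existing subscript --- giving $m \le m-1$. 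What your version buys is robustness and precision: the bound holds in \emph{every} configuration rather than only at stabilisation, you explicitly establish that the execution terminates (each effective transition activates an edge, so at most $\binom{n}{2}$ occur), and you verify the bookkeeping the paper leaves implicit (cross edges join only vertices already in $S$ and never add vertices, $F$ nodes are isolated until their single joining event). What the paper's version buys is brevity: one sentence, at the cost of leaving the reader to reconstruct exactly why the offending step must be a cross-edge rule between two $O_{k-1}$ nodes. Your closing remark about $n>3$ is apt, since neither argument actually uses that hypothesis beyond non-triviality of the tree.
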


\begin{proof}
If all nodes in the connected tree are in the $O_k$ state, then at some point two $O_{k-1}$ nodes would have to change to the $O_k$ state, which is against the rules of the protocol.
\qed
\end{proof}

\begin{corollary}
Due to the presence of nodes in a state $s \neq O_k$ and the fairness condition, Protocol 3 never stabilises with isolated nodes, defined as nodes which are not part of the main tree structure.
\end{corollary}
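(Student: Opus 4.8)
The plan is to argue by contradiction, combining the preceding lemma with the fairness condition. First I would pin down what an isolated node is in terms of the state space. Every rule of Protocol 3 has $F$ occurring only as the second component, and each such rule immediately turns that node into $O_0$ while activating the incident edge; moreover no rule ever deactivates an edge. Consequently a node carries no active edge (equivalently, lies outside the connected tree) if and only if it is in state $F$. So "stabilising with an isolated node" is precisely "stabilising in a configuration that still contains at least one node in state $F$", and it is this that I must rule out.

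Next I would establish that the execution actually settles on a single fixed configuration, not merely on a fixed output graph, so that fairness can be invoked. To do this I would introduce the potential $\Phi(C) = \sum_{v} \mathrm{index}(v) + |\{e : C(e) = 1\}|$, where $\mathrm{index}(v)$ is the subscript of the state of $v$ (and $0$ for $F$). Every effective transition of Protocol 3 strictly increases $\Phi$, since it raises at least one index and activates an edge, while $\Phi$ is bounded above by $kn + \binom{n}{2}$. Hence only finitely many effective transitions can ever occur, so the execution reaches a configuration $C^\star$ after which every applied transition is ineffective; in particular $C^\star$ recurs infinitely often along the execution.

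Then I would close the contradiction. Suppose $C^\star$ still contains an isolated node $u$, i.e. $C^\star(u) = F$. By the preceding lemma, for $n > 3$ the connected tree in $C^\star$ contains a node $v$ that is not full, that is, $v$ is in some state $L_x$ with $x < k$ or $O_y$ with $y < k$. Since $u$ is isolated, the edge $uv$ is inactive, so exactly one of the rules $(L_x, F, 0) \ra (L_{x+1}, O_0, 1)$ or $(O_y, F, 0) \ra (O_{y+1}, O_0, 1)$ applies to the encounter $uv$ and yields a configuration $C' \neq C^\star$ with $C^\star \ra C'$ (the edge $uv$ is now active, so the output graph changes as well). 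Because the execution is fair and $C^\star$ occurs infinitely often, $C'$ must also occur, contradicting the choice of $C^\star$ as a configuration admitting no further effective transition. Hence $C^\star$ contains no node in state $F$, i.e. no isolated nodes.

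The main obstacle I anticipate is the second step: being careful about exactly what fairness grants. Fairness only forces successors of configurations seen infinitely often to be seen as well, so I must first secure that the stabilised configuration genuinely recurs, and the monotone bounded potential $\Phi$ is what licenses this. A secondary point to verify is that the node supplied by the lemma really has an open slot compatible with an $F$-neighbour; this is immediate once the lemma's "not in the state of $k$" is read as "not full", since the connection rules to an $F$ node apply to any $L_x$ with $x<k$ and any $O_y$ with $y<k$, covering both the leader and ordinary tree nodes.
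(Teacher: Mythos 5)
Your proof is correct and follows essentially the same route as the paper, which states this corollary without any separate proof, justifying it directly by the preceding lemma (a non-full node exists in the stable tree) together with the fairness condition. Your additional steps --- characterising isolated nodes as exactly the $F$-state nodes, and the monotone bounded potential $\Phi$ guaranteeing that the stabilised configuration recurs infinitely often so that fairness can legitimately be invoked --- are rigorous fillings-in of details the paper leaves implicit, not a different argument.
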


\begin{lemma}
For $n > 3$, protocol 3 stabilises with at most $k - 2$ nodes in the states $\{O_x  |  x < k - 1 \}$.
\end{lemma}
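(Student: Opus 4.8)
The plan is to combine a saturation argument for the cross-edge rule with the spanning-tree structure maintained by the tree-rules. Write $A$ for the set of nodes whose state lies in $\{O_x : x < k-1\} = \{O_0,\dots,O_{k-2}\}$; these are exactly the non-leader nodes that can still serve as an endpoint of a cross-edge, and (reading the subscript as the node's active degree, as in the preceding lemma, where $O_k$ is the saturated degree-$k$ state) each such node has at most $k-2$ active incident edges. First I would argue that at an output-stable configuration $A$ induces a clique in the active graph: if some $u,v\in A$ were joined by an inactive edge, then because both subscripts are below $k-1$ the transition $(O_y,O_z,0)\to(O_{y+1},O_{z+1},1)$ is effective and available at the encounter $uv$, so by fairness it would eventually fire, contradicting stability. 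Hence every pair in $A$ is joined by an active edge, and in particular each node of $A$ has at least $|A|-1$ active edges inside $A$.

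Next I would extract one further edge leaving the clique using the tree that the tree-rules maintain on the tree-edges. By the Corollary above no isolated nodes survive, so at stabilisation the leader is the unique root of a spanning tree and every node of $A$ is a non-root node with a well-defined parent and depth. Let $u^\ast$ be a node of $A$ of minimum depth. Its parent $p$ either is the leader or has strictly smaller depth than $u^\ast$; in both cases $p\notin A$, so the tree-edge $u^\ast p$ is distinct from every clique edge counted above. Consequently the degree of $u^\ast$ is at least $(|A|-1)+1=|A|$. Since $u^\ast\in A$ has degree at most $k-2$, this forces $|A|\le k-2$, which is the statement.

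The main obstacle is securing the sharp constant $k-2$ rather than the weaker $k-1$ that the clique bound alone gives: a clique of degree-at-most-$(k-2)$ nodes only yields $|A|-1\le k-2$. The improvement by one relies entirely on exhibiting, for at least one member of the clique, an active edge that leaves $A$, and the minimum-depth (equivalently, forest-root) choice of $u^\ast$ is exactly what guarantees this while ensuring the extra edge is not one of the $|A|-1$ clique edges already counted. A secondary point that must be pinned down is the precise correspondence between a node's state subscript and its active degree, since an off-by-one there is the difference between the true bound $k-2$ and the vacuous $k-1$; I would fix this convention explicitly, and note that the hypothesis $n>3$ guarantees the spanning structure and that the relevant states actually arise.
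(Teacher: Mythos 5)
Your proof is correct and takes essentially the same approach as the paper's: stability plus fairness forces the nodes in $\{O_x : x < k-1\}$ to form a clique under the cross-edge rule, and one additional active edge into the tree then pushes some member's degree to $k-1$, contradicting membership in the set. In fact your minimum-depth choice of $u^\ast$ makes rigorous a step the paper only asserts (that the tree connection is distinct from the clique edges already counted), so your write-up is, if anything, tighter than the original.
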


\begin{proof}
Assume there are $k - 1$ nodes in the states $O_x$.
If this is the case, there must be some node which is connected to the tree and every other node with any state $O_x$, otherwise the protocol is not stable.
This node must have the state $O_{k - 1}$ which is not in $O_x$.
Therefore the number of nodes in $O_x$ is at most $k - 2$.
\qed
\end{proof}

\begin{theorem}
For $n > k > 3$, at most $n - k - 2$ nodes have a degree of either $k$ or $k - 1$ and $l \leq k - 2$ nodes are of degree at least $1$ and at most $k - 2$.
\end{theorem}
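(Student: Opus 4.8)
The plan is to fully characterise any output-stable configuration of Protocol~3 and then read off the two counting bounds by bookkeeping over the possible node degrees. Throughout I work in a stable configuration, i.e.\ one in which no effective transition of Protocol~3 is applicable, and I identify the degree of a node with the subscript of its state, so that a node in state $O_y$ or $L_x$ has degree $y$ or $x$ respectively. By the corollary established above, a stable configuration contains no isolated nodes; moreover, since any $F$ node shares an inactive edge with every other node, the rules $(O_y,F,0)\ra(O_{y+1},O_0,1)$ and $(L_x,F,0)\ra(L_{x+1},O_0,1)$ could fire as long as one $F$ node and one node of degree $<k$ coexisted, and the preceding lemma guarantees such a low-degree node. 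Hence stability forces every one of the $n$ nodes into the tree with degree at least $1$. I then partition the nodes into the \emph{low} class $D$ of nodes of degree in $\{1,\dots,k-2\}$ and the \emph{high} class of nodes of degree $k-1$ or $k$, so that it suffices to control $|D|$.

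The key structural step is that $D$ induces a clique. The two cross-edge rules $(L_x,O_y,0)\ra(L_{x+1},O_{y+1},1)$ and $(O_y,O_z,0)\ra(O_{y+1},O_{z+1},1)$ are applicable precisely when two nodes, both of degree strictly less than $k-1$, are joined by an inactive edge; in a stable configuration no such pair exists, so any two members of $D$ are already adjacent and $D$ is a clique of size $m=|D|$. I would then bound $m$ exactly as in the preceding lemma: if $m=k-1$, every node of $D$ already has $m-1=k-2$ neighbours inside $D$, which exhausts its low-degree budget, so no node of $D$ can be incident to an edge leaving $D$; thus $D$ would be a set of connected components of the output graph disjoint from the rest. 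Since the output graph is connected and spanning on $n>k>m$ nodes, $D$ cannot be all of it, a contradiction. Hence $m\le k-2$, which is exactly the claim that at most $l\le k-2$ nodes have degree between $1$ and $k-2$.

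Finally I would settle the first clause by counting the complement. Every node outside $D$ lies in the tree with degree $k-1$ or $k$, so the high class has size $n-|D|$, and invoking the lemma that guarantees a stable node of degree different from $k$ (which, not being low, must have degree exactly $k-1$) separates the degree-$(k-1)$ from the degree-$k$ population and closes the arithmetic for the stated bound. The step I expect to be the main obstacle is the clique-size argument: the inference $m\le k-2$ simultaneously uses that $D$ is internally saturated, that a node of $D$ cannot send an edge outward without exceeding degree $k-2$, and that the global output is spanning and connected for $n>k$; in particular the mixed rule $(L_x,O_y,0)\ra(L_{x+1},O_{y+1},1)$ must be invoked so that a low-degree root is forced into the same clique as the low-degree $O$-nodes, letting the leader be folded into the count uniformly rather than treated as a stray exceptional vertex.
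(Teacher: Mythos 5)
Your overall strategy is the same as the paper's: stability forces (i) no $F$ nodes and hence minimum degree $1$ (the corollary about isolated nodes), (ii) the low-degree nodes (degree at most $k-2$) to form a clique, since otherwise a cross-edge rule would still be applicable, and (iii) a bound of $k-2$ on the size of that clique. Your inline derivation of (iii) --- if the clique had $k-1$ members, each would already spend all $k-2$ of its admissible connections inside the clique, so the clique would be a component disconnected from the leader's tree, contradicting that the output is connected and spanning --- is in fact a cleaner rendering of the paper's corresponding lemma, which instead argues via a node that must touch both the tree and every clique member and therefore must be in state $O_{k-1}$, hence not low. The two arguments use the same ingredients (clique saturation plus connectivity), just arranged differently.

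The genuine problem is your final step. You claim that invoking the lemma guaranteeing a node of degree different from $k$ ``closes the arithmetic for the stated bound.'' It cannot. What the established facts give is: at most $k-2$ nodes are low, hence at least $n-(k-2)=n-k+2$ nodes have degree $k-1$ or $k$. Since $n-k+2 > n-k-2$, the first clause as literally stated --- at most $n-k-2$ nodes of degree $k$ or $k-1$ --- is violated in \emph{every} stable configuration, so no bookkeeping can rescue it. The statement is evidently garbled (presumably ``at least $n-k+2$'' was intended), and the paper's own proof glosses over exactly the same point with ``Therefore the theorem must hold.'' So your proposal matches the paper on everything that is actually provable, but the honest conclusion for the first clause is that it is unprovable as written; you should say so rather than assert that the arithmetic closes.
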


\begin{proof}
All nodes with degree $x < k - 1$ must be members of a clique, otherwise the protocol is not stable.
By Lemma 6, we know there are no isolated nodes, or nodes with degree $d < 1$.
By Lemma 7, we know that there are at most $k - 2$ nodes of degree less than $k - 1$, and that the maximum state within the clique is $O_{k - 2}$.
Threrefore the theorem must hold.
\qed
\end{proof}

\section{Proof of correctness for 2-Slot Protocol}
\label{sec:rotation}

Let $\mathscr{T} = \{G | G \text{ is a tree and } \forall u \in P \implies \Delta^+(u) \leq 2\}$, where $\Delta^+(u)$ is defined as the number of children of the node $u$.

\begin{lemma}
Under protocol 1, the connected component $S$, defined as the leader node and all nodes connected to the leader either directly or indirectly through some other nodes is eventually spanning.
\end{lemma}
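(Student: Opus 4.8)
The plan is to track a single nonnegative integer quantity---the total number of \emph{open child-slots} in the tree component $S$---and to show that it never drops to zero while free nodes remain, so that fairness forces $S$ to keep absorbing free nodes until it spans $P$.

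First I would assign to each tree node its number of open slots: a node in state $L_i$ or $O_i$ (for $i \in \{0,1\}$) has $2-i$ open slots, while $L_2$ and $O_2$ have none, and nodes in state $F$ contribute nothing since they are not yet in $S$. Writing $o$ for the sum of open slots over all nodes of $S$, I would establish the invariant
\[
o = |V(S)| + 1.
\]
This follows from a counting argument independent of the particular execution: every node of $S$ has total child-capacity $2$, so the tree on $|V(S)|$ nodes carries $2|V(S)|$ units of child-capacity, of which exactly $|V(S)|-1$ are consumed by its edges (a tree on $m$ nodes has $m-1$ edges), leaving $2|V(S)| - (|V(S)|-1) = |V(S)|+1$ open slots. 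Alternatively one verifies the invariant by induction over the four effective rules: each such rule increments one subscript (consuming one open slot) while converting an $F$ into a fresh $O_0$ (contributing two new open slots) and adding one node to $S$, so both sides of the identity increase by exactly one.

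With the invariant in hand the core argument is short. Note first that each effective rule increases $|V(S)|$ by exactly one and no rule ever decreases it, so $|V(S)|$ is monotone non-decreasing along the execution. Now suppose $S$ is not yet spanning, i.e.\ $|V(S)| = m < n$. Then at least one node remains in state $F$, and by the invariant $o = m+1 \geq 2 > 0$, so some node $x \in S$ sits in one of $L_0, L_1, O_0, O_1$ and has an open slot. For such an $x$ and any $F$-node $f$, their inactive edge supports an encounter whose matching rule yields a configuration $C'$ with $m+1$ tree nodes and $C \to C'$. Were the execution to remain at $|V(S)| = m$ forever, the configuration would freeze at a single $C^{*}$ recurring infinitely often while admitting the successor $C'$; the fairness condition then forces $C'$ to occur, a contradiction. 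Hence $|V(S)|$ strictly increases, and being a bounded monotone integer it reaches $n$ after finitely many connections, at which point $S$ contains every node of $P$ and is therefore spanning.

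I expect the only genuine obstacle to be the fairness step: one must argue carefully that the execution cannot stall in a configuration with free nodes remaining. This is exactly where the invariant $o \geq 1$ is essential, because it certifies that a connecting transition is always enabled whenever a free node exists, so fairness (no infinitely-recurring configuration may indefinitely postpone a one-step successor) applies. Everything else---the invariant and the monotone-boundedness termination---is routine, and the statement is precisely the $k=2$ instance of the analysis already used for Protocol 2 in Lemma 3.
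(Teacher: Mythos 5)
Your proof is correct and follows essentially the same approach as the paper: both track the number of open child-slots $o$ in $S$ and argue it can never reach zero, so that fairness forces every remaining $F$-node to eventually join. Your version is in fact tighter than the paper's (you establish the exact invariant $o = |V(S)|+1$ where the paper only notes $o$ is non-decreasing from $2$, and you spell out the fairness/freezing argument that the paper leaves implicit), but the underlying idea is identical.
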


\begin{proof}
Let $o$ be the number of open slots in $S$. Formally, $o = \sum_{u \in S} (2 - \Delta^+(u))$.
$o$ is initally 2 as there is one node in $S$ with no children.
$o$ is non-decreasing, as every increase in $\Delta^+(u)$ for some $u$ necessarily increases $|V(S)|$.
Since there are always open slots available, every unconnected node is guaranteed to be able to connect to $S$ at some point.
Therefore when $S$ stabilises it will contain all $u \in P$.
\qed
\end{proof}

\noindent
A node is \emph{available} if it has at least 1 open slot.

\begin{lemma}
For all executions of Protocol 1 on the population $P$ of n nodes, it stabilises to some $G \in \mathscr{T}$ where $|V(G)| = n$.
\end{lemma}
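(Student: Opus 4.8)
The plan is to mirror the inductive argument used for Protocol 2, specialised to $k = 2$, and then combine it with the spanning lemma (that the leader's component $S$ is eventually spanning) to obtain both membership in $\mathscr{T}$ and the spanning property $|V(G)| = n$. The heart of the argument is a structural invariant I would establish for every configuration reachable from the initial one, by induction on the number of effective transitions executed so far. Throughout I take the output states to be $Q_{out} = Q \setminus \{F\}$, so that the output graph of a configuration $C$ is exactly the active subgraph on the non-$F$ nodes.

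Concretely, I would show that in every reachable configuration the following hold simultaneously: (i) the non-$F$ nodes are exactly the nodes of $S$; (ii) $S$ is a tree rooted at the leader; (iii) every node $u \in S$ in state $L_i$ or $O_i$ has exactly $i$ children; and (iv) $i \le 2$. The base case is immediate: initially only the leader is present, in state $L_0$ with no children, every other node is in $F$, and no edge is active, so $S$ is the trivial one-node tree. For the inductive step I would examine the four effective rules. In each of them the left-hand side pairs a node already in $S$ (in one of $L_0, L_1, O_0, O_1$) with a node in state $F$ across an inactive edge, and the effect activates that edge, turns the $F$-node into a fresh $O_0$ leaf, and increments the subscript of the parent. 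This attaches a single new leaf to $S$: connectivity and acyclicity are preserved, so (i) and (ii) survive; the incremented subscript still records the child count, preserving (iii); and because no rule fires from a node already carrying subscript $2$ (there are no transitions out of $L_2$ or $O_2$), the child count never exceeds $2$, giving (iv).

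Having the invariant, I would note that every reachable configuration $C$ satisfies $G(C) = G(S) \in \mathscr{T}$, since $G(S)$ is a tree of maximum out-degree at most $2$. The spanning lemma guarantees that some reachable configuration has all $n$ nodes in $S$, and hence none remaining in state $F$. At that point every effective rule is disabled, because each of the four rules requires an endpoint in state $F$ over an inactive edge; the configuration is therefore terminal and in particular output-stable. The stabilised output graph is thus a spanning tree on all $n$ nodes with every node having at most two children, which is precisely an element of $\mathscr{T}$ with $|V(G)| = n$.

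I expect the only genuinely delicate points to be (i) and (ii): one must argue that the protocol never spawns a second growing component or closes a cycle. This is secured by the observation that every $O$-node is created only by the effect of a rule that simultaneously attaches it, as a leaf, to a node already in $S$, so every non-$F$ node is reachable from the leader and the structure grows strictly as a single rooted tree. The remaining steps — monotonicity of the subscripts and of the edge-activations, and the terminality argument once the $F$-nodes are exhausted — are routine given that no rule deactivates an edge, decrements a subscript, or acts on two $F$-nodes.
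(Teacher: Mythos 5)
Your proposal is correct and takes essentially the same route as the paper: an induction on the growth of the leader's component $S$ (the paper indexes by $|S|$, you by effective transitions, which is the same induction since each effective rule grows $S$ by one leaf), keyed on the observation that no transitions exist out of $L_2$ or $O_2$, and combined with the spanning lemma to get $|V(G)| = n$. Your write-up is somewhat more careful than the paper's — you state the single-component/no-cycle invariant and the terminality argument once the $F$-nodes are exhausted explicitly, both of which the paper leaves implicit — but the underlying argument is the same.
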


\begin{proof}
We prove this via an induction on the connected component $S$.
For the base case, there is one node in the state $L_0$. This is trivially a member of $\mathscr{T}$ as no connections have formed yet. We now assume that there is a connected component of size $|S|$.
For a connected component of size $|S| + 1$, an unconnected node $u \in V\setminus{S}$ in the state $F$ must connect to $S$ at some node $x \in S$.
By Lemma 1, such a node must exist. If the node $x$ has two children it is in the state $O_2$ or $L_2$, as for all nodes in states $O_i$ and $L_j$ the $i$ and $j$ correspond to the number of children of those nodes.
Since there is no defined transitions from these states no $u$ can connect to $x$.
Therefore $S$ remains a tree and $G(S) \in \mathscr{T}$.
\qed
\end{proof}

\begin{lemma}
For all $G \in \mathscr{T}$, there is an execution of Protocol 1 which stabilises on $G$ when starting on a population $P$ of size $n = |V(G)|$.
\end{lemma}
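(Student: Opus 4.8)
The plan is to prove this completeness direction constructively, by exhibiting for each target tree a specific interaction schedule that constitutes a valid execution and that builds exactly that tree; together with the soundness lemma proved just above, this establishes that Protocol 1 constructs the language $\mathscr{T}$ in the sense of the definition. Fix an arbitrary $G \in \mathscr{T}$ with root $r$, and recall that by definition every node of $G$ has at most two children. First I would set up the correspondence: identify the unique leader node (initially in state $L_0$) with the root $r$, leaving the remaining $n-1$ nodes in state $F$ to play the roles of the non-root vertices of $G$.

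Next I would fix any ordering $v_1, v_2, \ldots, v_{n-1}$ of the non-root vertices of $G$ that is consistent with the tree, i.e. in which every vertex appears after its parent (a preorder or BFS traversal from $r$ suffices). I would then process the vertices in this order, and for each $v_i$ schedule a single interaction between the already-placed node corresponding to its parent and some node still in state $F$ chosen to represent $v_i$. Because the parent was placed earlier and has at most two children in $G$, at the moment of this interaction it is in one of the states $L_0, L_1, O_0, O_1$, so the fired rule is exactly one of the four transitions of Protocol 1: the parent advances to $L_{x+1}$ or $O_{y+1}$, the fresh node becomes $O_0$, and the connecting edge becomes active. Each such step is therefore a legal transition whose precondition is met, so the whole sequence is a valid execution prefix.

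I would then argue that after these $n-1$ attachment interactions the active edges are precisely the edge set of $G$, since each non-root vertex contributes exactly one edge to its parent, so the output graph is isomorphic to $G$. Moreover this configuration is output-stable: every $F$-node has been consumed, and since all four effective transitions of Protocol 1 require an $F$ as the second argument, no further effective transition can fire, so the output remains unchanged forever. The degree bound $\Delta^+(u) \leq 2$ defining $\mathscr{T}$ is exactly what guarantees that we never need to attach a third child to any node, matching the fact that Protocol 1 provides transitions only up to $L_2$ and $O_2$.

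The step I expect to require the most care is the bookkeeping that makes the constructed schedule genuinely realisable: verifying that processing vertices in a parent-before-child order always leaves the intended parent with an available slot, which follows from $\Delta^+(v) \leq 2$ together with the chains $L_0 \to L_1 \to L_2$ and $O_0 \to O_1 \to O_2$, and that the consumption of all $F$-nodes coincides exactly with $S$ becoming spanning (here I can invoke the earlier lemma that $S$ is eventually spanning). This is where one must be explicit that the target's structural constraint and the protocol's state chain line up one-to-one; everything else reduces to routine verification that each chosen interaction is a defined transition.
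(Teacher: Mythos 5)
Your proposal is correct and follows essentially the same approach as the paper's proof: both fix a correspondence between the leader and the root $r$, then attach $F$-nodes to already-placed parents in a top-down (parent-before-child) traversal of $G$, yielding an execution whose output is isomorphic to $G$. Your version is in fact somewhat more careful than the paper's, since you explicitly verify that each scheduled interaction matches a defined transition and that the final configuration is output-stable because every effective rule of Protocol 1 requires an $F$-node.
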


\begin{proof}
We prove this providing a method to construct any $G \in \mathscr{T}$ with Protocol 1.
Let the leader node $l$ in the population $P$ correspond to the root $r$ of $G$. If $r$ has $i$ children, connect $i$ nodes in the state $F$ to $l$.
For each child $c$ of the leader node, let it correspond to a child $d$ of $r$. If $d$ has $j$ children, connect $j$ nodes in the state $F$ to $c$.
Continuing this process for all nodes $u \in G$, the result is a spanning tree where all nodes in the tree are equivalent to some $u \in G$.
\qed
\end{proof}

\begin{theorem}
Protocol 1 stably constructs the graph language $\mathscr{T}.$
\end{theorem}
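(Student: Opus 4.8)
The plan is to unpack the definition of ``constructs a graph language'' into its two constituent obligations and discharge each using the three preceding lemmas. Recall that a protocol $P$ constructs $\mathscr{T}$ means both (i) in every fair execution the output stabilizes to some $G \in \mathscr{T}$, and (ii) for every $G \in \mathscr{T}$ there is an execution of $P$ whose output stabilizes to a graph isomorphic to $G$. Since $\mathscr{T}$ is closed under isomorphism, obligation (ii) is exactly the completeness lemma (``for all $G \in \mathscr{T}$ there is an execution of Protocol~1 which stabilises on $G$''), which I would invoke directly. The bulk of the work is therefore obligation (i), and my strategy is to split it into a liveness stage and an invariant stage.

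For (i) I would argue that the output eventually stops changing and that the graph it stops at lies in $\mathscr{T}$. The invariant supplied by the soundness lemma gives that at every step the connected component $S$ is a tree in which no node has more than two children, so the active subgraph is always acyclic and respects the degree bound $\Delta^+(u) \leq 2$. The eventually-spanning lemma, which relies on fairness and on the open-slot count $o$ being strictly positive until $S$ covers all of $V_I$, gives that every node is eventually absorbed into $S$. The key step is to convert this liveness statement into genuine output-stabilization: once $S$ is spanning there are no $F$-nodes left, and every effective transition of Protocol~1 has an $F$ on its left-hand side, so no effective transition remains enabled. Hence the configuration is fixed from that step onward; and because every node then occupies a state in $Q_{out}$ (some $L_i$ or $O_i$) while all tree edges are active, the output graph equals the spanning tree and is a fixed member of $\mathscr{T}$. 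Combining the two stages yields (i).

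I expect the main obstacle to be precisely this bridge from the two liveness-flavoured lemmas to a statement about a stabilized output. Concretely, I must confirm (a) that fairness forces the eventually-spanning behaviour in \emph{every} execution rather than merely along some reachable branch, and (b) that no transition can fire once spanning is reached, so that the first spanning configuration is genuinely output-stable and convergence is well defined. The degree-bound and acyclicity parts are then routine consequences of the soundness invariant, and completeness is immediate from the third lemma, so the theorem follows by assembling the three lemmas as above.
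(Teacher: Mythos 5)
Your proposal is correct and follows essentially the same route as the paper, whose entire proof is a one-line appeal to the three preceding appendix lemmas (eventually-spanning, soundness, and completeness) exactly as you assemble them. The only difference is that you make explicit the bridging step the paper leaves implicit---that once $S$ is spanning no $F$-nodes remain, and since every effective transition of Protocol~1 requires an $F$-node, the configuration is genuinely output-stable---which is a worthwhile clarification but not a different argument.
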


\begin{proof}
By application of the Lemmas above.
\end{proof}

\section{Expected Running Time of 2-Children Spanning Tree}

\begin{lemma}
Let $T \in \mathscr{T}$ of $n$ nodes. The number of available nodes $\alpha(T) = \floor*{|T|/2}+1$.
\end{lemma}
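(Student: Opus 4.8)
The plan is to prove the equality by translating ``available'' into a condition on out-degree, reducing $\alpha(T)$ to a pure counting identity on the degree profile of $T$, and then pinning down the one remaining degree of freedom. First I would unpack the definition: a node $u$ has $2-\Delta^+(u)$ open slots, so it is available exactly when $\Delta^+(u) \le 1$. Writing $n_0, n_1, n_2$ for the number of nodes of $T$ with exactly $0, 1, 2$ children, availability therefore means $\alpha(T) = |\{u \in T : \Delta^+(u)=0\}| + |\{u \in T : \Delta^+(u)=1\}| = n_0 + n_1$, and the target equality becomes a statement purely about the triple $(n_0,n_1,n_2)$.

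Next I would invoke the two conservation laws that hold for any rooted tree in $\mathscr{T}$. Counting nodes gives $n_0 + n_1 + n_2 = n$, and counting edges (equivalently, summing children over all nodes, since every non-root node is the child of a unique parent) gives $n_1 + 2n_2 = n - 1$. Subtracting the second identity from the first yields the handshake relation $n_0 = n_2 + 1$, whence $\alpha(T) = n_0 + n_1 = n - n_2$; substituting $n_2 = (n-1-n_1)/2$ gives the compact form $\alpha(T) = (n + 1 + n_1)/2$. At this point the claimed value $\floor{n/2}+1$ coincides with $\alpha(T)$ precisely when $n_1$ attains its parity-forced minimum, namely $n_1 = 0$ for odd $n$ and $n_1 = 1$ for even $n$.

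The main obstacle, and the crux of the argument, is therefore to establish that every $T \in \mathscr{T}$ realises this minimal number $n_1$ of out-degree-one nodes, since the two conservation laws alone leave $n_1$ free and only the combination $(n+1+n_1)/2$ is determined. I expect the genuine work to live here: some additional structural feature must be brought in to force $n_1$ down to its minimum, and I would look for it either in a maximal-branching property of $T$ or in the slot-filling discipline of the $2$-Slot protocol. Concretely I would attempt an induction on $|T|$, with base case $|T|=1$ giving $\alpha = 1 = \floor{1/2}+1$, and an inductive step that attaches a new leaf to an available node while tracking how $n_1$ and $\floor{n/2}$ move. The delicate point is that attaching a child to a $0$-child node raises $\alpha$ by one, whereas attaching to a $1$-child node leaves $\alpha$ unchanged, so the induction closes only under an attachment pattern that keeps $n_1$ pinned at its minimum; making that pattern precise, and verifying it is forced for all $T$ in the intended family, is the step I expect to be hardest.
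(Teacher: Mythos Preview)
Your degree-counting derivation $\alpha(T) = (n+1+n_1)/2$ is cleaner and more systematic than the paper's argument, and your identification of the crux---that the claimed value holds only when $n_1$ is at its parity-forced minimum---is exactly right. Indeed, the lemma as literally stated is false for arbitrary $T \in \mathscr{T}$: a path on $n$ nodes has $n_1 = n-1$ and $\alpha(T) = n$, not $\floor{n/2}+1$.

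The paper's proof does not resolve this difficulty; it is a short parity induction on $n$ that simply asserts ``two new nodes must become children of the same node, and the second new node takes the second slot,'' i.e., it \emph{assumes} the minimal-$n_1$ attachment pattern you flagged as the hard step, without justifying it from the definition of $\mathscr{T}$. The actual justification lives not in $\mathscr{T}$ but in the restricted process $\mathscr{P}$ defined immediately after the lemma: its scheduling rule rebalances any two one-child nodes into a two-child node plus a leaf, forcing $n_1 \le 1$ throughout. So the lemma should be read as a statement about the trees arising along $\mathscr{P}$ rather than about all of $\mathscr{T}$, and under that reading both your approach and the paper's induction close immediately. Your formulation has the advantage of making the hidden hypothesis explicit and quantifying exactly how far a general $T$ can deviate from the claimed bound.
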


\begin{proof}
Observe that for $T$, every second node which connects to $T$ keeps the number of available nodes the same.
This is because two new nodes must become children of the same node, and the second new node takes the second slot.
For the base case, $n = 1$ and $\alpha = 0 + 1 = 1$.
We divide $n = n + 1$ into two cases: $n$ is \emph{even} and $n$ is \emph{odd}.
If $n$ is \emph{even}, then $\alpha = n/2 + 1$.
Then for $n = n+1$, $\alpha = \floor*{n+1/2} + 1 = n/2 + 1$. This corresponds to observation earlier that every other node (i.e $n$ is odd) should not increase $\alpha$.
If $n$ is \emph{odd}, then $\alpha = \floor*{n/2} + 1 = (n-1)/2 + 1$.
Then for $n = n+1$, $\alpha = \floor*{n+1/2} + 1 =  n/2 + 1$ as expected.

\noindent
\emph{Remark:}
At any point during the execution of $A$, for the connected component $S$, $G(S) \in \alpha(T)$.
\qed
\end{proof}

Let the probablistic process $\mathscr{P}$ be an execution of the protocol 1 with the following scheduling restriction:
If at any point during the execution of  $A$ two nodes $x$ and $y$ have exactly one child, disconnect that child of $x$ or $y$ which is a leaf and connect it to the other node.
If both are leaves, pick one at random.\\

\begin{lemma}
The expected time to convergence of the probabalistic process $\mathscr{P}$ is $O(\log{n})$.
\end{lemma}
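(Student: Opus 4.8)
The plan is to reduce the analysis to a one-dimensional birth process on the single quantity $s = |V(S)|$, the size of the connected component containing the leader. The whole point of the rebalancing rule defining $\mathscr{P}$ is that, by the preceding lemma (and its remark), it forces the number of available nodes to equal $\alpha(S) = \floor*{s/2} + 1$ at \emph{every} step, deterministically, rather than letting it depend on the random history. Consequently the growth rate of the tree is fully captured by $s$ alone, and the rebalancing moves themselves never change $s$ (they only relocate a leaf), so I may ignore them when counting the steps that actually enlarge the tree.

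First I would compute the per-step probability that $\mathscr{P}$ adds a new node. A connection occurs exactly when the scheduler selects a pair consisting of one of the $\alpha(S)$ available nodes and one of the $n - s$ unconnected nodes in state $F$; there are $\alpha(S)(n-s)$ such pairs out of $\binom{n}{2}$, so the success probability at size $s$ is
\[
p_s = \frac{\alpha(S)\,(n-s)}{\binom{n}{2}} = \frac{(\floor*{s/2}+1)(n-s)}{n(n-1)/2} \geq c\,\frac{s(n-s)}{n^2}
\]
for an absolute constant $c > 0$. The waiting time spent at size $s$ before the next successful connection is geometric with mean $1/p_s$, so by linearity of expectation the expected number of sequential steps to convergence is $\sum_{s=1}^{n-1} 1/p_s$.

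Next I would bound this sum by splitting at $s = n/2$. For $s \le n/2$ one has $n - s \ge n/2$, hence $p_s \ge c\,s/(2n)$ and the contribution is $O\!\big(n\sum_{s\le n/2} 1/s\big) = O(n\log n)$; for $s > n/2$ the roles of $s$ and $n-s$ are symmetric, giving the same $O(n\log n)$ bound in terms of the harmonic sum over the number of remaining $F$ nodes. Adding the two phases yields $O(n\log n)$ expected sequential steps, and dividing by $n$ to pass to parallel time (as specified in the model section) gives the claimed $O(\log n)$.

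The main obstacle, and the step I would be most careful about, is the justification that the rebalancing genuinely fixes $\alpha(S) = \floor*{s/2}+1$ at every reachable configuration, independently of the order of past interactions — this is what lets me treat $s$ as a chain with a clean, history-independent success probability $p_s$ bounded below by $c\,s(n-s)/n^2$. I would verify this directly from the invariant that at most one node ever has exactly one child under $\mathscr{P}$ (so the available nodes are precisely the leaves plus possibly one single-child node), which is exactly the content established for the available-node count in the previous lemma. A secondary point worth stating explicitly is that this lemma asks only for a bound in expectation, so the linearity-of-expectation sum over geometric waiting times suffices and no concentration argument (unlike the balls-and-bins analysis used for the unrestricted protocol) is required.
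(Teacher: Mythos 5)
Your proposal is correct and follows essentially the same route as the paper's own proof: both invoke the preceding lemma to pin the number of available nodes at $\floor*{s/2}+1$, decompose the execution into geometric waiting times between successive connections, sum $1/p_s$ by linearity of expectation to obtain $O(n\log n)$ sequential steps, and convert to $O(\log n)$ parallel time. The only cosmetic difference is computational: you bound $\sum_s \frac{1}{s(n-s)}$ by splitting the sum at $s = n/2$, whereas the paper uses the partial-fraction identity $\frac{1}{s(n-s)} = \frac{1}{n}\left(\frac{1}{s} + \frac{1}{n-s}\right)$ to reduce it to harmonic sums; both yield the same bound.
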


\begin{proof}
Let the r.v. $X$ be the number of steps until convergence.
A step is \emph{successful} if any unconnected node joins the connected component $S$.

\noindent
An \emph{epoch} $i$ is the period beginning with the step following the $(i-1)$st success and ending with the step at which the $i$th success occurs.
The r.v. $X_i$, $1 \leq i \leq n-1$, is the number of steps in epoch $i$.

\noindent
$p_i$ is the probability of success at any step in epoch $i$. This is defined as $p_i = \frac{2\alpha(T_i)(n-i)}{n(n-1)}$, where $T_i$ is the graph of the strongly connected component G(S) in epoch $i$.\\
It follows that $E[X_i] = 1/p_i = \frac{n(n-1)}{2\alpha(T_i)(n-i)}$.

\noindent
By linearity of expectation we have

\begin{align*}
E[X] &= E\Bigg[\sum_{i=1}^{n-1} X_i\Bigg] = \sum_{i=1}^{n-1} E[X_i] = \sum_{i=1}^{n-1} \frac{n(n-1)}{2\alpha(T_i)(n-i)}
= \frac{n(n-1)}{2}\sum_{i=1}^{n-1}\frac{1}{\alpha(T_i)(n-i)}\\
&= \frac{n(n-1)}{2}\sum_{i=1}^{n-1}\frac{1}{(\floor*{i/2} + 1)(n-i)}
\leq \frac{n(n-1)}{2}\sum_{i=1}^{n-1}\frac{1}{(i/2)(n-i)}\\
&= n(n-1)\sum_{i=1}^{n-1}\frac{1}{i(n-i)}
= n(n-1)\sum_{i=1}^{n-1}\frac{1}{n}\Bigg(\frac{1}{i} + \frac{1}{n-i}\Bigg)\\
&= (n-1)\Bigg[\sum_{i=1}^{n-1}\frac{1}{i} + \sum_{i=1}^{n-1}\frac{1}{n-i}\Bigg]
= (n-1)2H_{n-1} = 2(n-1)[\ln(n-1) + O(1)]\\
&= O(n\log{n})
\end{align*}
\qed
\end{proof}

\begin{lemma}
The running time of $\mathscr{P}$ is the worst case running time for the protocol 1.
\end{lemma}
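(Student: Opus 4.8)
The plan is to show that the expected running time is a monotone function of how ``available'' the connected component is at each stage, and that the rebalancing restriction defining $\mathscr{P}$ drives this availability to its minimum possible value at every step. Recall from the analysis of $\mathscr{P}$ that the per-step success probability in epoch $i$ is $p_i = \frac{2\alpha(T_i)(n-i)}{n(n-1)}$, so that $E[X_i] = 1/p_i$ is a strictly decreasing function of $\alpha(T_i)$. Hence, to maximise the total expected time $\sum_i E[X_i]$ it suffices to minimise $\alpha(T_i)$ at every component size $i$. The first step is therefore to pin down, for a tree with $m$ nodes, the smallest possible number of available nodes.

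For this I would count as follows. Let $f$ be the number of \emph{full} nodes (those with exactly two children) in a component $T$ of size $m$ with $G(T) \in \mathscr{T}$. Since the component is a tree it has $m-1$ edges, equal to the total number of children, giving $2f + h = m - 1$, where $h$ is the number of nodes with exactly one child; together with $f + h + e = m$ (with $e$ the number of childless nodes) this yields $\alpha(T) = h + e = m - f$. Thus minimising $\alpha$ is the same as maximising the number of full nodes, and the constraint $h = m - 1 - 2f \geq 0$ forces $f \leq \lfloor (m-1)/2 \rfloor$, so $\alpha(T) \geq m - \lfloor (m-1)/2 \rfloor = \lfloor m/2 \rfloor + 1$ for every such $T$. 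The rebalancing rule of $\mathscr{P}$ --- which, whenever two nodes each hold a single child, moves one leaf so that one node becomes full and the other childless --- is exactly the operation that removes a pair of one-child nodes and increments $f$, so $\mathscr{P}$ keeps $f$ maximal and $\alpha$ at the lower bound $\lfloor m/2 \rfloor + 1$ throughout (this is the content of Lemma~4). Any unrestricted execution of Protocol~1 therefore has $\alpha(T_i) \geq \alpha(T_i^{\mathscr{P}})$ at the same component size $i$.

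The remaining step is to convert this pointwise comparison of availabilities into a comparison of running times. I would set up a coupling in which an arbitrary execution of Protocol~1 and the process $\mathscr{P}$ pass through components of sizes $1, 2, \ldots, n$ in lockstep, epoch by epoch; since at each size $\mathscr{P}$ has the smallest possible $\alpha$, and since $E[X_i] = 1/p_i$ is decreasing in $\alpha(T_i)$, the expected length of each epoch under $\mathscr{P}$ dominates the corresponding epoch of the arbitrary execution, and summing over epochs by linearity of expectation gives that $\mathscr{P}$ maximises the expected running time over all executions. The main obstacle is the coupling itself: in a genuine execution the tree shape, and hence $\alpha(T_i)$, is a random variable whose distribution depends on the entire history, so one must argue that the domination $\alpha(T_i^{\mathscr{P}}) \leq \alpha(T_i)$ can be maintained simultaneously across the whole run, and that the rebalanced trees of $\mathscr{P}$ remain within $\mathscr{T}$ and are realisable by an admissible schedule. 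Making this stochastic-domination step rigorous --- rather than merely comparing the two extreme tree shapes --- is where the real work lies.
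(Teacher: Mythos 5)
Your overall route is the same as the paper's: argue that the rebalancing restriction keeps the number of available nodes at its minimum possible value for every component size, and that smaller availability means larger expected epoch length, hence $\mathscr{P}$ upper-bounds every execution of Protocol 1. The paper does this with an informal exchange argument by contradiction (any execution violating the restriction has, at some point, one more available node and is therefore faster in expectation). Your degree-counting lemma --- $2f + h = m-1$ and $f + h + e = m$, hence $\alpha(T) = m - f \geq \lfloor m/2 \rfloor + 1$ with equality exactly when at most one node has a single child --- is a cleaner and more rigorous way to obtain the key inequality. Note, incidentally, that the paper's own availability lemma asserts $\alpha(T) = \lfloor |T|/2 \rfloor + 1$ as an equality for all $T \in \mathscr{T}$, which is false in general (a path of $m \geq 3$ nodes has $\alpha = m$); your counting makes clear it is really a minimum, attained precisely by the trees $\mathscr{P}$ maintains.

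However, the obstacle you flag at the end --- that a genuine coupling or stochastic-domination argument is needed and constitutes ``the real work'' --- is not actually there, and you should not leave the proof resting on it. Your bound $\alpha(T) \geq \lfloor m/2 \rfloor + 1$ is deterministic: it holds for every tree in $\mathscr{T}$ of size $m$, hence for the random tree of any execution at every component size, whatever the history. Moreover, in Protocol 1 every effective transition attaches an $F$ node to an available node, so the tree cannot change shape within an epoch; conditioned on the history $\mathcal{F}_{i-1}$ up to the $(i-1)$st success, the length of epoch $i$ is geometric with parameter $p_i = \frac{2\alpha(T_i)(n-i)}{n(n-1)}$, giving $E[X_i \mid \mathcal{F}_{i-1}] = \frac{n(n-1)}{2\alpha(T_i)(n-i)} \leq \frac{n(n-1)}{2(\lfloor i/2 \rfloor + 1)(n-i)}$, which is exactly the deterministic epoch expectation of $\mathscr{P}$. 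The tower property and linearity of expectation then finish the proof; no lockstep coupling is required. Finally, your worry that the rebalanced trees of $\mathscr{P}$ must be ``realisable by an admissible schedule'' is moot: $\mathscr{P}$ is not an execution of Protocol 1 at all (Protocol 1 has no disconnection rule), but a reference process, and all that is needed is that its trees lie in $\mathscr{T}$ and achieve the minimum availability, which your counting already establishes.
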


\begin{proof}
Assume there is an execution of $A$ which has an slower running time than $\mathscr{P}$.
Such an execution must have a lower number of available nodes at some point than $\mathscr{P}$.
If the execution simulates the scheduling restriction of $\mathscr{P}$ then it cannot be slower than $\mathscr{P}$.
If the execution does not simulate the restriction then at some point two nodes have two leaves and one is not shifted to the other.
The number of available nodes is therefore greater by one and the expected running time faster than $\mathscr{P}$.
Therefore any execution of $A$ must be at least as fast as $\mathscr{P}$.
\qed
\end{proof}

\begin{theorem}
The expected running time of protocol 1 is upper bounded by the O($\log{n}$) running time of $\mathscr{P}$.
\end{theorem}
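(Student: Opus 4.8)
The plan is to obtain the bound purely by combining the two immediately preceding lemmas, so that the work here is logical glue rather than fresh computation. First I would recall that the lemma on the expected time of $\mathscr{P}$ already establishes $E[X] = O(n\log n)$ in sequential time, which under the convention of the paper (dividing sequential time by $n$) is $O(\log n)$ parallel time. Second, the worst-case lemma asserts that $\mathscr{P}$ dominates every execution of Protocol 1, i.e.\ no execution of Protocol 1 is slower than $\mathscr{P}$. The theorem is then exactly the statement that the expected running time of Protocol 1 is at most that of $\mathscr{P}$, namely $O(\log n)$.

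To turn ``no execution is slower'' into a bound on expectations, I would set up a coupling between the uniform random scheduler driving Protocol 1 and the restricted scheduler of $\mathscr{P}$, run on the same sequence of chosen pairs, so that the comparison holds pathwise rather than only in the informal sense of the exchange argument. The quantity to track is $\alpha(T_i)$, the number of available nodes at the $i$th success. By the lemma computing $\alpha(T) = \lfloor |T|/2\rfloor + 1$ together with the consolidation (shifting) rule defining $\mathscr{P}$, the value of $\alpha$ under $\mathscr{P}$ is the minimum attainable for a given component size, so at every epoch $\alpha(T_i)$ under Protocol 1 is at least its value under $\mathscr{P}$. I would then invoke monotonicity of the per-step success probability $p_i = \frac{2\alpha(T_i)(n-i)}{n(n-1)}$ in $\alpha(T_i)$: a larger available-node count yields a larger $p_i$, hence a stochastically smaller epoch length $X_i$. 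Summing over the $n-1$ epochs by linearity of expectation gives $E[X_{\mathrm{Protocol\ 1}}] \le E[X_{\mathscr{P}}] = O(n\log n)$ sequential, i.e.\ $O(\log n)$ parallel.

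The main obstacle is precisely this passage from the informal exchange argument of the worst-case lemma to a genuine stochastic domination. The shifting operation of $\mathscr{P}$ rearranges which node receives a new child, and I would need to verify that it is well defined at every step, that it never increases $\alpha$, and that it can be realized by a coupling which leaves unchanged the identity of the unconnected node that joins $S$ at each success. This ensures that the two processes require the same number of successes and differ only in their per-epoch waiting times, so that the comparison reduces cleanly to comparing the $p_i$. Once the monotonicity of $p_i$ in $\alpha$ and the pathwise domination $\alpha_{\mathrm{Protocol\ 1}} \ge \alpha_{\mathscr{P}}$ are in hand, the conclusion follows immediately.
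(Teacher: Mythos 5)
Your proposal matches the paper's own proof, which is literally ``By application of lemmas 13 and 14'': combine the $O(\log{n})$ expected-time bound for $\mathscr{P}$ with the lemma stating that $\mathscr{P}$ is the worst case for Protocol 1, exactly as your first paragraph does. Your additional coupling construction for turning ``no execution is slower'' into a genuine comparison of expectations is rigor the paper omits (its worst-case lemma rests on an informal exchange argument), but it is a more careful rendering of the same route, not a different one.
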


\begin{proof}
By application of lemmas 13 and 14.
\qed
\end{proof}

\section{$k = 3$ Formal Proof Strategy}

To investigate why the running time of the Protocol is Polylogarithmic, we modified the simulator to perform a single test and output the degree of each node for every time step.

\begin{figure}[H]
\includegraphics[width=\textwidth]{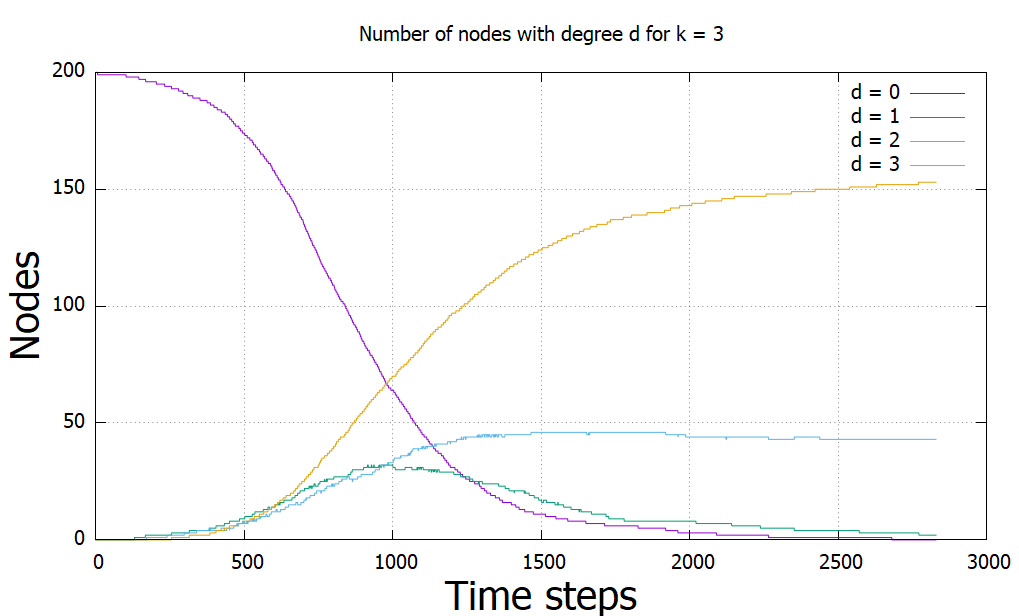}
\caption{Nodes by degree for the protocol when k = 3} \label{fig2}
\end{figure}

Based on the above we have created the following strategy for proving that the running time of Protocol 3 for $k = 3$.\\

The proof is divided into two phases. 
In phase 1, the number of nodes with degree 0 is large, but after some time it will be small.
It can be shown that in the time it takes for this to happen, the number of nodes with degrees 2 are at least some fraction of $n$ with high probability, perhaps $n/A$ for some constant $A$.
In phase 2, which begins when the number of degree 0 nodes is small, it can be shown that the number of degree 2 nodes remains at least $n/A$ w.h.p and that this allows the protocol to stabilise in polylogarithmic time for arbitrarily low numbers of degree 0/1 nodes.

\newpage

\section{Running time for $k < \log{n}$}

\begin{figure}[h!]
\includegraphics[width=\textwidth]{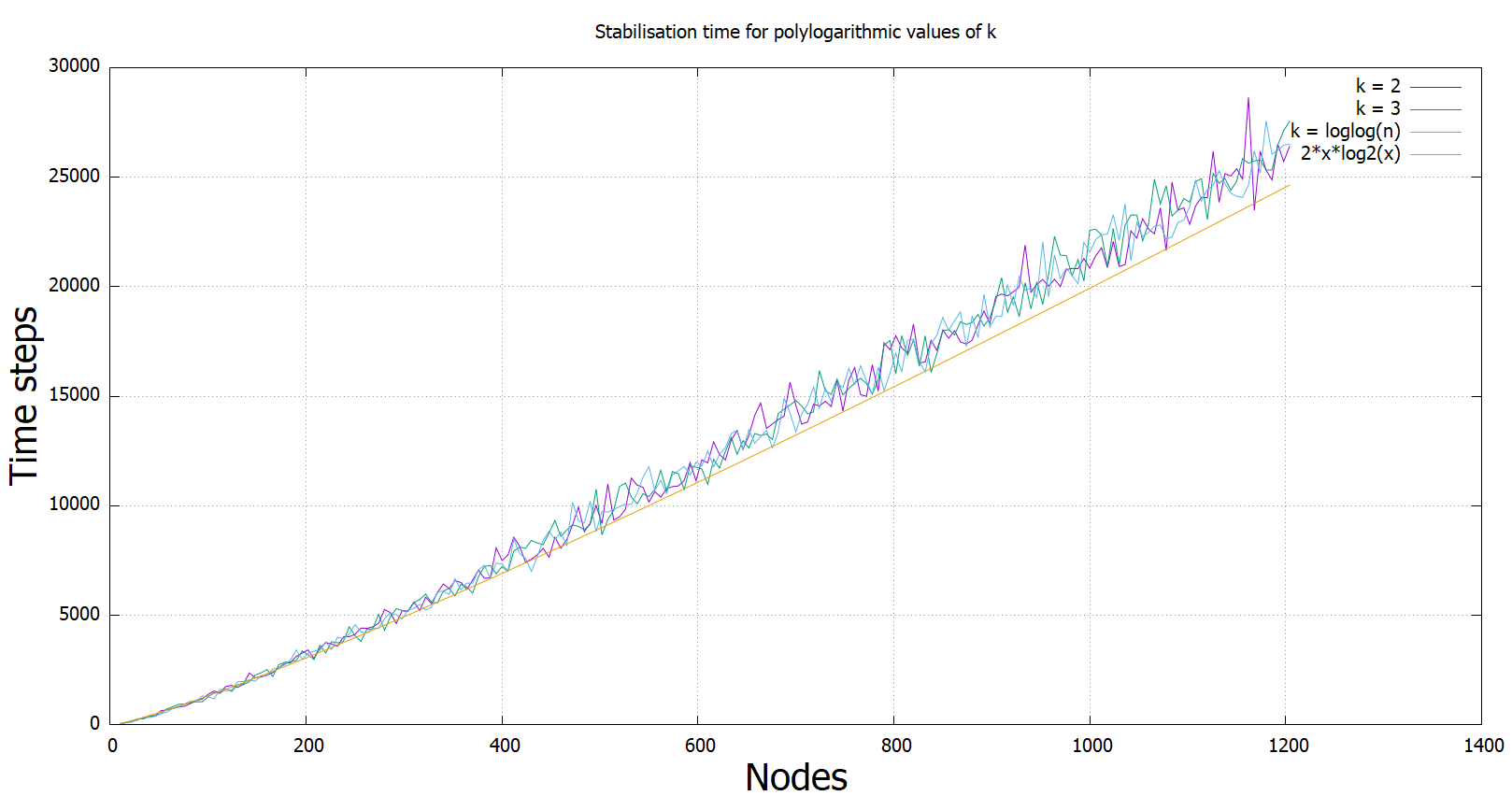}
\caption{The polylograthmic results in detail.} \label{fig4}
\end{figure}

\end{document}